\newtheorem{assumption}{\textbf{Assumption}}
\newtheorem{theorem}{\textbf{Theorem}}
\newtheorem{proof}{\textbf{Proof}}
\newtheorem{definition}{\textbf{Definition}}
\newtheorem{corollary}{\textbf{Corollary}}
\newtheorem{remark}{Remark}
\newtheorem{lemma}{\textbf{Lemma}}
\begin{document}
		\title{Fast mmwave Beam Alignment via\\ Correlated Bandit Learning}
	\author{\small Wen Wu,~\IEEEmembership{\small Student~Member,~IEEE,}
			        Nan~Cheng,~\IEEEmembership{\small Member,~IEEE,}
	        Ning Zhang,~\IEEEmembership{\small Senior~Member,~IEEE,}
	      \\  Peng~Yang,~\IEEEmembership{\small Member,~IEEE,}
	      Weihua~Zhuang,~\IEEEmembership{\small Fellow,~IEEE,} and
	        Xuemin~(Sherman)~Shen,~\IEEEmembership{\small Fellow,~IEEE}
	\thanks{W. Wu, N. Cheng, P. Yang, W. Zhuang, and X. Shen are with the Department of Electrical and Computer Engineering, University of Waterloo, 200 University Avenue West, Waterloo, ON N2L 3G1, Canada (e-mail:\{w77wu, n5cheng, p38yang, wzhuang, sshen\}@uwaterloo.ca). N. Zhang is with Department of Computing Sciences, Texas A\&M University at Corpus Christi, TX, USA.~Email: ning.zhang@tamucc.edu.}}	
\maketitle

\begin{abstract}
Beam alignment (BA) is to ensure the transmitter and receiver beams are accurately aligned to establish a reliable communication link in millimeter-wave (mmwave) systems. Existing BA methods search the entire beam space to identify the optimal transmit-receive beam pair, which incurs significant BA latency on the order of seconds in the worst case. In this paper, we develop a learning algorithm to reduce BA latency, namely \underline{H}ierarchical \underline{B}eam \underline{A}lignment (HBA) algorithm. We first formulate the BA problem as a stochastic multi-armed bandit problem with the objective to maximize the cumulative received signal strength within a certain period. The proposed algorithm takes advantage of the \emph{correlation structure} among beams such that the information from nearby beams is extracted to identify the optimal beam, instead of searching the entire beam space. Furthermore, the \emph{prior knowledge} on the channel fluctuation is incorporated in the proposed algorithm to further accelerate the BA process. Theoretical analysis indicates that the proposed algorithm is asymptotically optimal. Extensive simulation results demonstrate that the proposed algorithm can identify the optimal beam with a high probability and reduce the BA latency from hundreds of milliseconds to a few milliseconds in the \emph{multipath} channel, as compared to the existing BA method in IEEE 802.11ad.



%


	\vspace*{0mm}
	\begin{flushleft}
		\textbf{\it Index Terms} -- mmwave, beam alignment, correlation structure,  prior knowledge, multi-armed bandit.
	\end{flushleft}	
\end{abstract}

\section{Introduction}

The ever-increasing data traffic driven by various emerging data-hungry applications, such as high-definition mobile video streaming, cordless virtual reality gaming and wireless fiber-to-home access, has placed a growing strain on the creaking traditional cellular networks. Millimeter-wave (mmwave) communication is envisioned as the most promising technology to accommodate the skyrocketing data traffic through harnessing multi-GHz bandwidths. Multiple standardization efforts, such as IEEE 802.11ad \cite{zhou2017enhanced,Ad_standard} and ongoing IEEE 802.11ay \cite{ayMagazine2017,wu2017Wiopt}, and large-scale field-trials have paved the road for the commercialization of mmwave communications.

In mmwave communication systems, narrow directional beams are adopted at both the transmitter and receiver to compensate for the huge attenuation loss. Since beams are narrow, the communication is possible only when the transmitter and receiver beams are properly aligned \cite{qiao2016proactive}, as shown in Fig. \ref{fig:beam_alignment}. Beam alignment (BA) is such a process to identify the optimal transmit-receive beam pair which attains the maximum received signal strength (RSS). Beam misalignment can dramatically reduce the link budget and drop the throughput from multiple Gbps to a few hundred Mbps \cite{infocom2018efficient}. As a key process in mmwave communications, BA is of significance to achieve multi-gigabit wireless transmission. To identify the best beam pair, a naive exhaustive search method scans all the combinations of the transmitter and receiver beams, which results in significant BA latency. Yet, a low-latency BA process is imperative for practical mmwave systems to accommodate real-time applications. Moreover, in mobile scenarios, user mobility changes the beam direction and thus frequently invokes BA, which further exacerbates the latency. To accelerate the beam search, IEEE 802.11ad protocol decouples the BA process into two steps. Firstly, the transmitter starts with a quasi-omnidirectional beam and the receiver scans the beam space for the best receiver beam. Secondly, the transmitter scans the beam space for the best transmitter beam while keeping the receiver quasi-omnidirectional. Still, the existing BA method in IEEE 802.11ad may take up to seconds with a large number of candidate beams \cite{hassanieh2018fast}. To reduce BA latency, can we identify the optimal beam without searching the entire beam space? 

\begin{figure}[t]
	\renewcommand{\figurename}{Fig.}
	\centering
	\includegraphics[width=0.45\textwidth]{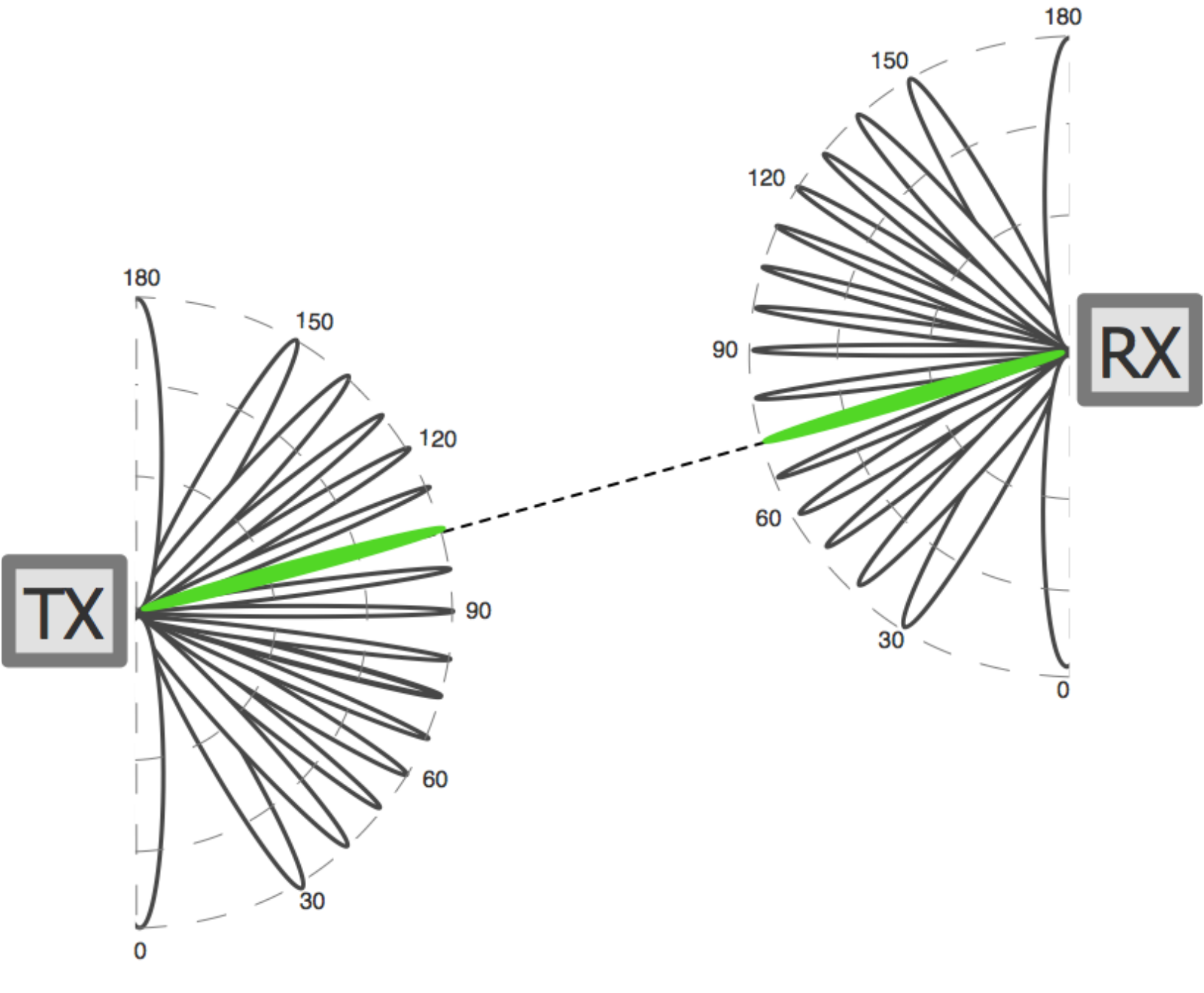}
	\caption{A beam alignment example with 16 beams. The well-aligned transmitter and receiver beams are represented by solid green beams.}
	\label{fig:beam_alignment}
	\vspace{-0.7cm}
\end{figure}


In the literature, there are some initial research efforts to address this challenge. Utilizing the sparse characteristic of the mmwave channel, Marzi \emph{et al.} developed a compressed sensing BA method  \cite{Marzi2016JSTSP}.  
Some out-of-band information, e.g., the Wi-Fi signal, is exploited to identify the optimal beam in \cite{sur2017wifi}. 
These works perform BA with the assistance of excessive extra information besides RSS. 
Surprisingly, a crucial feature, the correlation structure among beams, is ignored in previous works. In fact, the RSS of nearby beams is similar which means nearby beams are highly correlated. In this way, if a beam does not perform well, its nearby beams are highly likely to perform worse either. The measurement of one beam not only reveals information about itself, but also its nearby beams. Hence, the information from nearby beams can be learned to identify the optimal beam without searching the entire beam space. 




In this paper, we propose a fast BA algorithm, named Hierarchical Beam Alignment (HBA), by utilizing the \emph{correlation structure} among beams and the \emph{prior knowledge} on the channel fluctuation. In the BA problem, fast BA means identifying the optimal beam with the minimum latency. This problem boils down to sequentially selecting beams to maximize the cumulative RSS within a certain period, which can be formulated as a stochastic multi-armed bandit (MAB) problem. To solve this problem efficiently, two unique characteristics are incorporated in our proposed algorithm. Firstly, theoretical analysis indicates that the correlation structure among beams in the \emph{multipath} channel can be characterized by a \emph{multimodal} function. Utilizing this correlation structure, the proposed algorithm intelligently narrows the search space to identify the optimal beam. Secondly, incorporating the prior knowledge on the channel fluctuation to appropriately accommodate reward uncertainty, the proposed algorithm avoids excessive exploration and further accelerates the BA process. Theoretical analysis shows that the regret of HBA is \emph{bounded} and thus the proposed algorithm is asymptotically optimal. Extensive simulation results demonstrate that HBA can identify the optimal beam with a high probability and reduce the number of beam measurements in the multipath channel, even with coarse prior knowledge. Particularly, the proposed algorithm reduces the BA latency by orders of magnitude as compared to the BA method in IEEE 802.11ad. 

 Our contributions in this paper are summarized as follows.
\begin{itemize}
	\item We formulate the BA problem as a stochastic MAB problem, in which the objective is to sequentially select beams to maximize cumulative RSS within a certain period;
	\item We prove that the mean RSS function over the beam space follows a multimodality structure in the multipath channel, which characterizes the correlation structure among nearby beams;	  
	\item We propose a fast BA algorithm to accelerate beam search by exploiting the correlation structure and the prior knowledge on the channel fluctuation;
	\item We derive a sublinear analytical upper bound on the cumulative regret, i.e.,  $O(\sqrt{T\log T})$, indicating the proposed algorithm is asymptotically optimal.
\end{itemize}

The remainder of this paper is organized as follows. 
Section \ref{sec: related works} reviews related works. The system model and problem formulation are presented in Section \ref{sec: system model}. Section \ref{sec: unimodal bandit} proposes a fast BA algorithm. Section \ref{sec: regret_performance_analysis} analyzes the regret performance of the proposed algorithm. Simulation results are given in Section \ref{sec: simulation results}. Finally,  Section \ref{sec: conclusion} concludes this paper.

\section{Related Work}\label{sec: related works}
The BA problem in mmwave systems garners much attention recently. Zhou \emph{et al.} elaborated the challenges of the random access protocol in the BA process in dense networks \cite{zhou2017enhanced}. In addition, the authors developed possible solutions from the MAC perspective. Utilizing the sparse characteristic that only a few paths exist in the mmwave channel, a compressed sensing solution can align beams with a low beam measurement complexity of $O(L\log N)$, where $L$ is the number of channel paths and $N$ is the number of beams \cite{Marzi2016JSTSP}. The approach suits for mmwave systems where the accurate phase information is available. In another line of research, Wang \emph{et al.} developed a fast-discovery multi-resolution beam search in \cite{wang2009beam}, which probes the wide beam first and continues to narrow beams until identifying the best beam. While feasible, the method needs to adjust the beam resolution at every step. On the other hand, Xiao \emph{et al.}  proposed a hierarchical codebook search method to efficiently identify the optimal beam by jointly utilizing sub-array and deactivation techniques \cite{xiao2016hierarchical}. Moreover, they provide the closed-form expression of the hierarchical codebook. Sun \emph{et al.} further developed an orthogonal pilot based low-overhead beam alignment method for the multiuser mmwave systems \cite{sun2019beam}. 
Another solution exploits some out-of-band information, i.e., the Wi-Fi signal, to identify the optimal beam \cite{sur2017wifi}. {Similar works extract spatial information from sub-6 GHz signals to assist BA as well as boost throughput \cite{ali2018millimeter,hashemi2018out}.} 
Recent efforts leverage the multi-armed beams capability to improve BA performance. Hassanieh \emph{et al.}  proposed a fast BA protocol through scanning multiple directions simultaneously \cite{hassanieh2018fast}. A similar method, which treats the problem of identifying the optimal beam as that of locating the error in linear block codes, is developed to reduce BA complexity \cite{shabara2018linear}. The works in \cite{zhou2017enhanced, Marzi2016JSTSP,wang2009beam,sur2017wifi,hassanieh2018fast, ali2018millimeter,xiao2016hierarchical, hashemi2018out,sun2019beam, shabara2018linear} provide possible solutions for the BA problem in various scenarios. Different from prior works, our work considers the correlation structure among nearby beams to assist BA process. 

MAB theory has been widely applied in wireless networks, such as power allocation in small base stations \cite{wang2017small}\cite{shen2018generalized}, content placement in edge caching \cite{yang2018content,muller2017context}, task assignment in mobile crowdsourcing \cite{Yang2017bj} and mobility management in mobile edge computing \cite{sun2017emm}. Very recently, the BA problem is studied based on MAB theory, which makes online decision to strike the balance between \emph{exploitation} and \emph{exploration}. Gulati \emph{et al.} applied the celebrated upper confidence bound (UCB) algorithm in beam selection in traditional MIMO systems \cite{gulati2014learning}. 
Sim \emph{et al.}  developed an online beam selection algorithm in mmwave vehicular networks based on contextual bandit theory \cite{sim2018online}. This work learns information from real-time environment to enhance the throughput of mmwave networks.  
A pioneering work in \cite{infocom2018efficient} exploits a unimodal structure among beams to accelerate the BA process in static environments. This solution focuses on aligning beams in the single-path channel. Another work developed a distributed BA search method based on adversarial bandit theory \cite{chafaa2019adversarial}. These works provide highly relevant insights on the BA problem in mmwave networks via bandit learning theory. However, they do not provide a method to quickly and accurately align beams, especially in complicated multipath channels. Different from existing works, we focus on leveraging the correlation structure and prior knowledge to accelerate the BA process in the multipath channel with only RSS. 

\section{System Model and Problem Formulation}\label{sec: system model}

\subsection{Beam Alignment Model}


{As shown in Fig. \ref{fig:mmwave_structure}, we consider a point-to-point mmwave system in a static environment, where the transmitter is equipped with $N$ antennas. }Uniform linear arrays are assumed in both the transmitter and receiver, and each antenna element is connected to a phase shifter to form narrow directional beams \cite{wu2017performance}. In the BA process, the receiver keeps quasi-omnidirectional while the transmitter scans the beam space to identify the best one.  
We consider the sparse clustered channel model, i.e., Saleh-Valenzuela model \cite{Akdeniz2013Millimeter}. Suppose that the channel consists of $L$ paths: one dominant line-of-sight (LOS) path and $L-1$ non-line-of-sight (NLOS) paths, due to strong reflections from the ground or side walls.  
The channel array response between the transmitter and receiver can be represented as a mixture of sinusoids, 
\begin{equation}
\begin{split}
h_{n}=g_0e^{j\frac{2\pi d}{\lambda}n \vartheta_{0}}+ \sum_{l=1}^{L-1}g_l e^{j\frac{2\pi d}{\lambda}n \vartheta_{l}}
\end{split}
\end{equation} 
where $0\leq n\leq N-1$. Let $d$ and $\lambda$ denote the array element spacing and carrier wavelength, respectively. Typically, $d={\lambda}/{2}$. Let $g_0$ and $g_l$ represent the channel gains of the LOS path and the $l$-th NLOS path, respectively. Note that the channel gain of the NLOS path is around 10 dB weaker than that of the LOS path \cite{maltsev2009experimental}. Let $\theta$ denote the physical angle of the channel. The corresponding spatial angle of the channel is denoted by $\vartheta=\cos \theta_{}$. We vectorize the sinusoids $e^{j {2\pi d}n \vartheta_{}/{\lambda}}, 0 \leq n\leq N-1$ into a vector $\mathbf{x}(\vartheta_{})\in \mathbb{C}^{N\times 1}$.  
 Thus, the channel vector is given by
 \begin{equation}
 \mathbf{h}=g_0  \mathbf{x}(\vartheta_{0})+\sum_{l=1}^{L-1}g_l  \mathbf{x}(\vartheta_{l})\in \mathbb{C}^{N \times 1}.
 \end{equation}
{Since we consider a static environment, the channel vector keeps invariant during the BA process. }

\begin{figure}[t]
	\renewcommand{\figurename}{Fig.}
	\centering
	\includegraphics[width=0.45\textwidth]{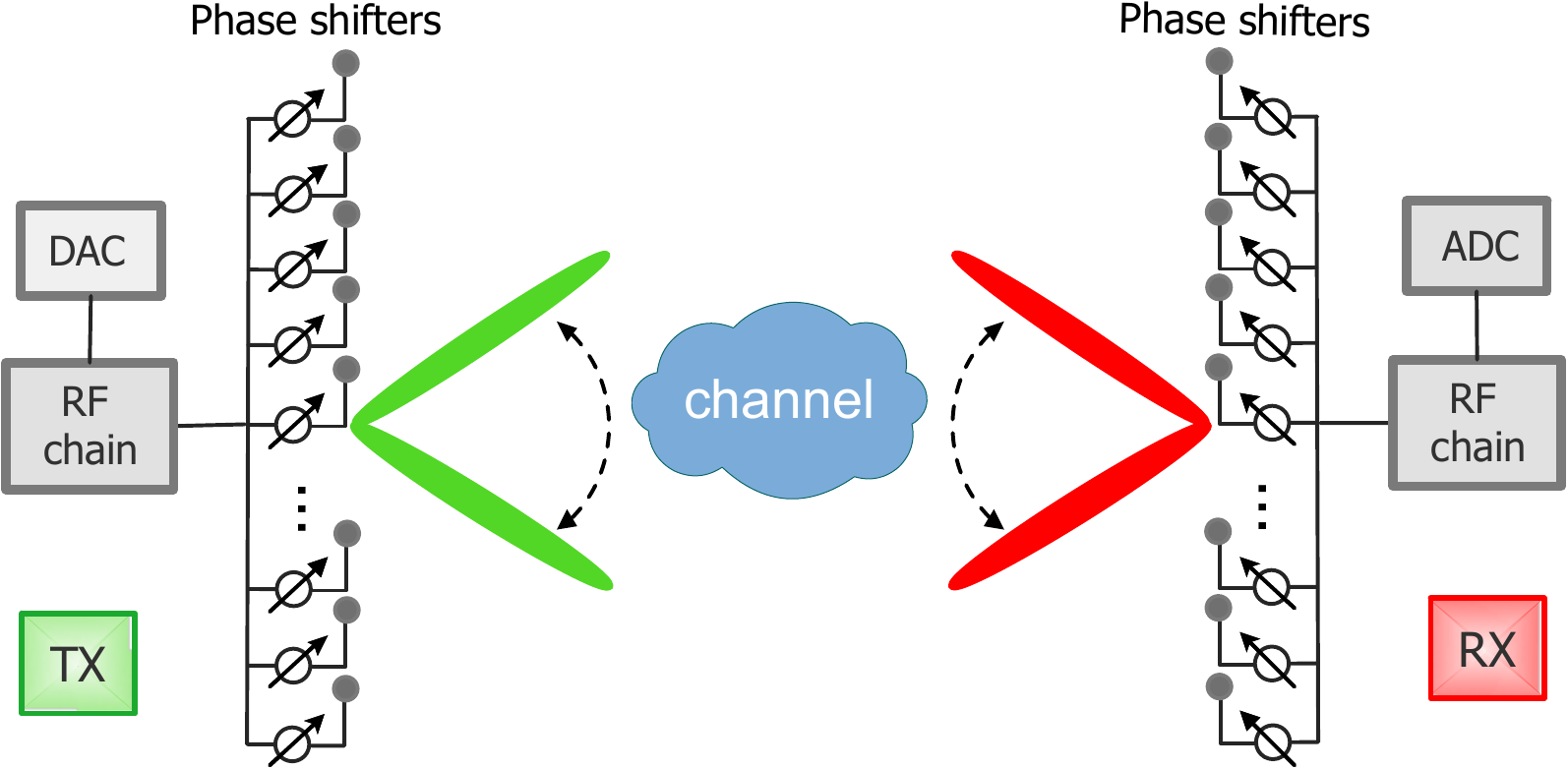}
	\caption{The point-to-point mmwave system.}
	\label{fig:mmwave_structure}
	\vspace{-0.7cm}
\end{figure}

Let $\mathbf{W}=[\mathbf{w}_1,\mathbf{w}_2,...,\mathbf{w}_N]\in \mathbb{C}^{N \times N}$ denote the unitary Discrete Fourier Transform (DFT) matrix whose columns constitute the transmit beam space, given by
\begin{equation}\label{equ:W}
\mathbf{W}=\frac{1}{\sqrt{N}}[\mathbf{x}(\omega_1), \mathbf{x}(\omega_2),..., \mathbf{x} (\omega_{N})].
\end{equation}
In \eqref{equ:W}, $\omega_i=\frac{2i-N}{N}$ represents the spatial angle of the $i$-th beam \cite{Marzi2016JSTSP}. 
According to the BA method in IEEE 802.11ad, the transmitter scans all the beams in $\mathbf{W}$, while the receiver beam keeps omni-directional. The received signal vector is given by  
\begin{equation}\label{equ:received_RSS}
\mathbf{y}=\sqrt{P} \mathbf{h}^H\mathbf{W}+\mathbf{n}
\end{equation}
where $\mathbf{n}$ denotes the additive Gaussian white noise vector. Let $N_oW$ denote the mean noise power, where $W$ is the channel bandwidth and $N_o$ is the noise power density.

The problem of identifying the optimal transmit beam boils down to identifying the element with the maximum magnitude within $\mathbf{y}$. Hence, to identify the optimal beam, the BA method in IEEE 802.11ad protocol needs to measure the RSS of all the transmit beams, leading to a high beam measurement complexity \cite{hassanieh2018fast}. Searching the entire beam space incurs significant BA latency, especially when the beam space is large.

%






\subsection{Problem Formulation}
In this subsection, the BA problem is formulated as a stochastic MAB problem for \emph{stationary} environments. Consider a time slotted system with $T$ time slots of equal duration. In time slot $t\in \{1,2,..., T\}$, the transmitter selects a beam to transmit data. Let $\mathcal{B}=\{b_1, b_2,..., b_N\}$ denote the set of candidate beams, which can be considered as \textit{arms} in the bandit theory. 
At the beginning of time slot $t$, the transmitter selects a beam denoted by $b^t \in \mathcal{B}$. At the end of time slot $t$, the transmitter observes noisy RSS from the receiver, i.e., $r\left(b^t\right)$, which is considered as a \textit{reward}.  
Rigorously, the reward is a random variable due to the channel fluctuation, such as shadow fading and the disturbance effect. For simplicity, we assume that the reward follows a Gaussian distribution with a variance $\sigma^2$. In other words, $\sigma^2$ also represents the variance of the channel fluctuation, which is utilized as \emph{prior knowledge} in the following algorithm design. 
Note that the proposed algorithm can also be applied to non-Gaussian distribution settings, as validated in Section \ref{sec: simulation results}.


Let $b^{1 : t}=\{b^1, b^2,..., b^t\}$ denote the sequentially selected beams up to time slot $t$. The set of corresponding sequential rewards is represented by $r^{1:t}=\{r(b^1), r(b^2),..., r(b^t)\}$. In the MAB setting, a sequential beam selection \emph{policy} is how the transmitter selects the next beam based on previously selected beams $b^{1:t}$ and observed rewards $r^{1:t}$. Let $\Pi$ be the set of all possible sequential beam selection policies. Our objective is to find a policy, $\pi \in \Pi$, that maximizes the expected cumulative reward (RSS) within a given time horizon of $T$ slots, i.e., $\sum_{t=1}^{T}r(b^t)$. This objective conforms our target since a fast BA algorithm is to identify the optimal beam with the minimum latency.

In the MAB theory, \emph{expected cumulative regret} is commonly adopted to evaluate the performance of a given policy, which denotes the expected cumulative difference between the reward of the selected beam and the maximum reward achieved by the optimal beam. The \emph{expected cumulative regret} is defined as
\begin{equation}
\begin{split}
R^{\pi}(T)
&=\mathbb{E}\left [\sum_{t=1}^{T}\left( {r}(b{^\star}) -r(b^t)\right) \right]
=T\cdot \mathbb{E}\left [ r\left(b{^\star}\right)\right]-\sum_{b_i\in \mathcal{B}}^{} N_{b_i}^{\pi}(T)\mathbb{E}\left [r\left(b_i\right)\right]
\end{split}
\end{equation}
where $b{^\star}$ represents the optimal beam and $N_{b_i}^{\pi}(T)$ denotes the number of times that $b_i$ has been selected up to time slot $T$. Hence, maximizing the cumulative reward is equivalent to minimizing the \emph{expected cumulative regret} within $T$ \cite{infocom2018efficient}, which can be expressed as


%


\begin{subequations}\label{problem1}
	\begin{align}
	{\mathcal{P}1:}	& \underset{\pi \in {\Pi}}{\text{min}}
	& & R^{\pi}(T) \nonumber \\
	& \text{s.t.}  
	& &  \sum_{b_i\in \mathcal{B}}^{} N_{b_i}^{\pi}(T)\leq  T\\ \label{problem1, constaint 1}
	& &&N_{b_i}^{\pi}(T)\in \mathbb{Z}, \forall b_i\in \mathcal{B}. 
	\end{align}
\end{subequations}

The preceding MAB problem $\mathcal{P}1$ can be solved by the celebrated UCB algorithm \cite{auer2002finite}. However, this problem has two characteristics that were not utilized in the UCB algorithm. Firstly, since the RSS of nearby beams are highly correlated, the correlation information from nearby beams can be utilized to select the next beam efficiently. Secondly, the prior knowledge on the channel fluctuation reflects the information of environment, which can be exploited to appropriately accommodate reward uncertainty such that the BA process can be further accelerated. 
In the following, we will leverage these two characteristics to accelerate the convergence speed, and hence reduce BA latency.

\section{Fast Beam Alignment}\label{sec: unimodal bandit}
In this section, we first analyze and validate that the mean reward (RSS) over the beam space follows a multimodality structure, which characterizes the inherent correlation among beams. Next, by exploiting the correlation structure and the prior knowledge, a fast BA algorithm is proposed to identify the optimal beam. 

\subsection{Correlation Structure}


Consider a \emph{cyclic} undirected graph ${G}=({\mathcal{B}},{{E}})$ whose vertices ${\mathcal{B}}$ stand for the beams. Let $(b_i,b_{i+1})\in E$ denote the edge that connects neighboring beams $b_{i}$ and $b_{i+1}$. 
In addition, $(b_N,b_1)\in E$ indicates that the last beam $b_N$ and the first beam $b_1$ are neighbors since their beam orientations are close to each other. 
The unimodality structure is defined as follows. 
\begin{definition}\label{definition:unimodal}
	\textbf{(Unimodality)} Let $b_{i^\star}$ denote the optimal beam in ${G}$. The \emph{unimodality} structure indicates that, $\forall b_i \in \mathcal{B}$, there exist a path, $(b_{i},b_{i+1},...,b_{i^\star})$, along which the mean reward is strictly increasing. 


\end{definition}

	In other words, the unimodality structure means that there is no local optimal beam over the beam space. Next, we aim to show that the correlation structure among beams follows above unimodality structure. Consider the single-path channel, where $g$ and $\vartheta$ represent the channel gain and channel spatial angle of the path, respectively. With \eqref{equ:received_RSS}, the mean RSS is given by 

\begin{equation}\label{equ:antenna_directivity_gain}
\begin{split}
\mathbb{E}\left[	r({b_i})\right]
&=P\left| \mathbf{h}^H\mathbf{w}_i\right|^2+N_oW\\
&=\frac{Pg^2 }{N }\left|\mathbf{x}^H(\vartheta) \mathbf{x}(\omega_{i})\right|^2+N_oW\\
&=\frac{Pg^2}{N }\left| \sum_{n=0}^{N-1}e^{j\frac{2\pi d}{\lambda}n (\omega_i-\vartheta_{})} \right|^2+N_oW\\
&=\frac{Pg^2}{N }D\left(\omega_i-\vartheta_{}\right)+N_oW, \forall b_i\in \mathcal{B}
\end{split}
\end{equation}	
where 
\begin{equation}
D(x)=\frac{\sin^2({N  \pi d x}/{\lambda})}{ \sin^2({\pi d x}/{\lambda})}
\end{equation}
denotes the antenna directivity function, which depends on the angular misalignment $x$. Hence, the mean RSS is a function of angular misalignment $\omega_i-\vartheta$.

\begin{theorem}\label{lemma:unimodality_reward_function}
In the single-path channel, the mean reward (RSS)  over the beam space is a unimodal function. 
\end{theorem}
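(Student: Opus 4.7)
My plan is to exploit the closed form for the mean RSS derived immediately above the theorem and reduce the unimodality claim to a monotonicity statement about the function $\sin^2(\pi x/2)$ sampled on the DFT grid.

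First, I would rewrite the directivity function as a geometric sum,
\[
D(\omega_i - \vartheta) = \left|\frac{e^{j\pi N(\omega_i-\vartheta)}-1}{e^{j\pi(\omega_i-\vartheta)}-1}\right|^2.
\]
The critical observation is that, since $\omega_i = (2i-N)/N$, we have $e^{j\pi N\omega_i} = e^{j\pi(2i-N)} = (-1)^N$ for every $i$; hence the numerator equals $\left|(-1)^N e^{-j\pi N\vartheta} - 1\right|^2$, a quantity \emph{independent of the beam index}~$i$. After simplification,
\[
D(\omega_i-\vartheta) \;=\; \frac{C(\vartheta)}{4\sin^2(\pi(\omega_i-\vartheta)/2)},
\]
with $C(\vartheta)>0$ whenever $\vartheta$ is not itself on the DFT grid. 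Substituting this back into the mean-RSS expression just above the theorem, $\mathbb{E}[r(b_i)]$ becomes a strictly decreasing affine function of $\sin^2(\pi(\omega_i-\vartheta)/2)$.

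Second, I would translate Definition~\ref{definition:unimodal} into a purely geometric statement about $\phi(x):=\sin^2(\pi x/2)$ sampled on the shifted grid $x_i := \omega_i - \vartheta$. Because $\phi$ has period~$2$, and on the fundamental interval $(-1,1]$ it vanishes only at $x=0$ and is strictly increasing in $|x|$, the unique minimizer over $\{x_i\}_{i=1}^{N}$ is the beam $b_{i^\star}$ whose spatial angle lies closest to $\vartheta$ along the cycle. A single step along the cyclic graph $G$ shifts $x_j$ by $\pm 2/N$; taking the direction that reduces the cyclic distance to $i^\star$, the representative of $x_j$ in $(-1,1]$ strictly decreases in absolute value at each step, so $\phi(x_j)$ strictly decreases and hence the mean RSS strictly increases, exactly as required. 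Iterating this one-step argument produces, from any starting beam $b_i$, a cyclic path $(b_i, b_{i\pm 1}, \dots, b_{i^\star})$ of strictly increasing mean reward.

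The main obstacle I anticipate is the careful bookkeeping of the cyclic structure: tracking $\omega_j-\vartheta$ modulo~$2$ and verifying that a single step along $G$ in the chosen direction does decrease the representative in the fundamental domain, even on the portion of the path that crosses the identification $b_N\sim b_1$. A secondary issue is the measure-zero degenerate cases where $\vartheta$ coincides with a DFT grid point (so $C(\vartheta)=0$ and all non-optimal beams share the common mean $N_o W$) or where $\vartheta$ is equidistant from two neighboring grid points (so $i^\star$ is not unique); both can be dispatched by a direct limiting calculation and a tie-breaking convention, but they should be flagged explicitly.
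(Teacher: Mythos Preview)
Your proposal is correct and follows essentially the same route as the paper's proof: both observe that on the DFT grid the numerator of the Fej\'er kernel is independent of the beam index (the paper sees this via $\sin^2\!\big(N\pi(\delta+2(i-i^\star)/N)/2\big)=\sin^2(N\pi\delta/2)$, you via $e^{j\pi N\omega_i}=(-1)^N$), so the mean RSS is a strictly decreasing function of $\sin^2(\pi(\omega_i-\vartheta)/2)$, whose monotonicity in the cyclic distance $|i-i^\star|$ yields the required increasing path. Your treatment is in fact more careful than the paper's, which dispatches the denominator analysis with ``simple analysis'' and does not flag the cyclic-wraparound bookkeeping or the degenerate grid-point/midpoint cases you mention.
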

\begin{proof}
Proof is provided in Appendix \ref{appendix:theorem multimodality}.
\end{proof}

The linear combination of several unimodal functions is a \emph{multimodal} function, which means that there exist several local optimums.
\begin{corollary}\label{proposition:multimodal}
	 In the multipath channel, the mean reward (RSS) over the beam space is a multimodal function. The dominant peak of the multimodal function is caused by the LOS path, while other peaks are caused by NLOS paths.
\end{corollary}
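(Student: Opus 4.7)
The plan is to reduce the multipath case to a sum of the single-path objects already handled by Theorem~\ref{lemma:unimodality_reward_function}. First I would substitute the multipath channel vector $\mathbf{h} = g_0 \mathbf{x}(\vartheta_0) + \sum_{l=1}^{L-1} g_l \mathbf{x}(\vartheta_l)$ into the mean RSS expression $\mathbb{E}[r(b_i)] = P|\mathbf{h}^H \mathbf{w}_i|^2 + N_o W$ and expand the squared magnitude. This produces $L$ ``diagonal'' contributions of the form $\frac{P g_l^2}{N} D(\omega_i - \vartheta_l)$, each of which, by Theorem~\ref{lemma:unimodality_reward_function} applied path-by-path, is a unimodal bump peaked at the beam whose spatial angle is closest to $\vartheta_l$, plus $L(L-1)$ cross terms of the form $\frac{P g_l g_{l'}^{*}}{N}\, \mathbf{x}^H(\vartheta_l)\mathbf{w}_i \mathbf{w}_i^H \mathbf{x}(\vartheta_{l'})$ with $l \neq l'$.

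Next I would argue that the cross terms are negligible relative to the diagonal ones. For any fixed beam $\mathbf{w}_i$, the inner product $\mathbf{x}^H(\vartheta_l)\mathbf{w}_i$ is a Dirichlet-kernel expression that is sharply concentrated near $\omega_i = \vartheta_l$ and decays rapidly elsewhere. Hence, provided the path spatial angles $\{\vartheta_l\}$ are separated by more than a beamwidth (of order $2/N$), at most one factor in any cross term can be large, so each cross term is small compared with the matched diagonal contribution. Accepting this approximation, the mean RSS reduces to a linear combination $\sum_{l=0}^{L-1} \frac{P g_l^2}{N} D(\omega_i - \vartheta_l) + N_o W$ of unimodal functions centered at $L$ distinct spatial angles, which in general exhibits up to $L$ local maxima — i.e.\ it is multimodal. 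The dominant peak claim then follows from the cited channel statistics $g_0^2 \gg g_l^2$ (NLOS paths being roughly $10$~dB weaker): since $D(0) = N^2$ and the bumps are well separated, the LOS term dominates all NLOS terms pointwise near its peak, so the global maximum lies at the beam closest to $\vartheta_0$.

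The hard part will be making the cross-term suppression fully rigorous. The DFT steering vectors $\mathbf{x}(\omega_i)$ are only exactly orthogonal to $\mathbf{x}(\vartheta_l)$ when $\vartheta_l$ happens to sit on the DFT grid, which is generically not the case; without either an asymptotic-in-$N$ argument or an explicit angular-separation hypothesis on the $\vartheta_l$, constructive interference between two closely spaced paths could in principle merge two bumps into one and destroy strict multimodality. Since the Corollary only asserts the existence of several local optima and the identity of the dominant one, I would state a mild angular-separation assumption (generic for physical mmwave multipath), invoke Theorem~\ref{lemma:unimodality_reward_function} on each diagonal term, and close the argument by combining the Dirichlet-decay bound on cross terms with the LOS/NLOS power gap.
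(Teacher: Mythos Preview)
Your approach is essentially the same as the paper's: decompose the multipath RSS into per-path Dirichlet-kernel contributions, invoke Theorem~\ref{lemma:unimodality_reward_function} on each, and use $g_0^2 > g_l^2$ to identify the dominant peak. The paper simply writes
\[
\mathbb{E}[r(b_i)] = \frac{Pg_0^2}{N}D(\omega_i-\vartheta_0) + \sum_{l=1}^{L-1}\frac{Pg_l^2}{N}D(\omega_i-\vartheta_l) + N_oW
\]
and proceeds from there, without ever mentioning the cross terms $\frac{P g_l g_{l'}^{*}}{N}\mathbf{x}^H(\vartheta_l)\mathbf{w}_i\mathbf{w}_i^H\mathbf{x}(\vartheta_{l'})$ that you correctly identify as arising from the expansion of $|\mathbf{h}^H\mathbf{w}_i|^2$. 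In other words, the paper silently drops exactly the terms you flag as ``the hard part''; your proposal is therefore more careful than the published argument, and your angular-separation caveat is a genuine gap in the paper's proof that you have noticed and addressed rather than introduced.
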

\begin{proof}
	Proof is provided in Appendix \ref{appendix:tmultimodality}.
\end{proof}

For example, Fig. \ref{Fig:Graphical unimodality} shows the RSS function over the beam space in a two-path channel. Even though the practical RSS is noisy due to the channel fluctuation, we observe that the mean RSS function follows the multimodality structure. For a two-path mmwave channel, there exists two peaks in the mean RSS function, where the dominant peak is due to the LOS path and another smaller peak is due to the NLOS path. Furthermore, the multimodality structure has been observed in many in-field measurements in mmwave systems, which further validates our theoretical results. 

  \begin{figure}[t]
	\centering
		\vspace{-0.5cm}
	\renewcommand{\figurename}{Fig.}
	\includegraphics[width=0.45\textwidth]{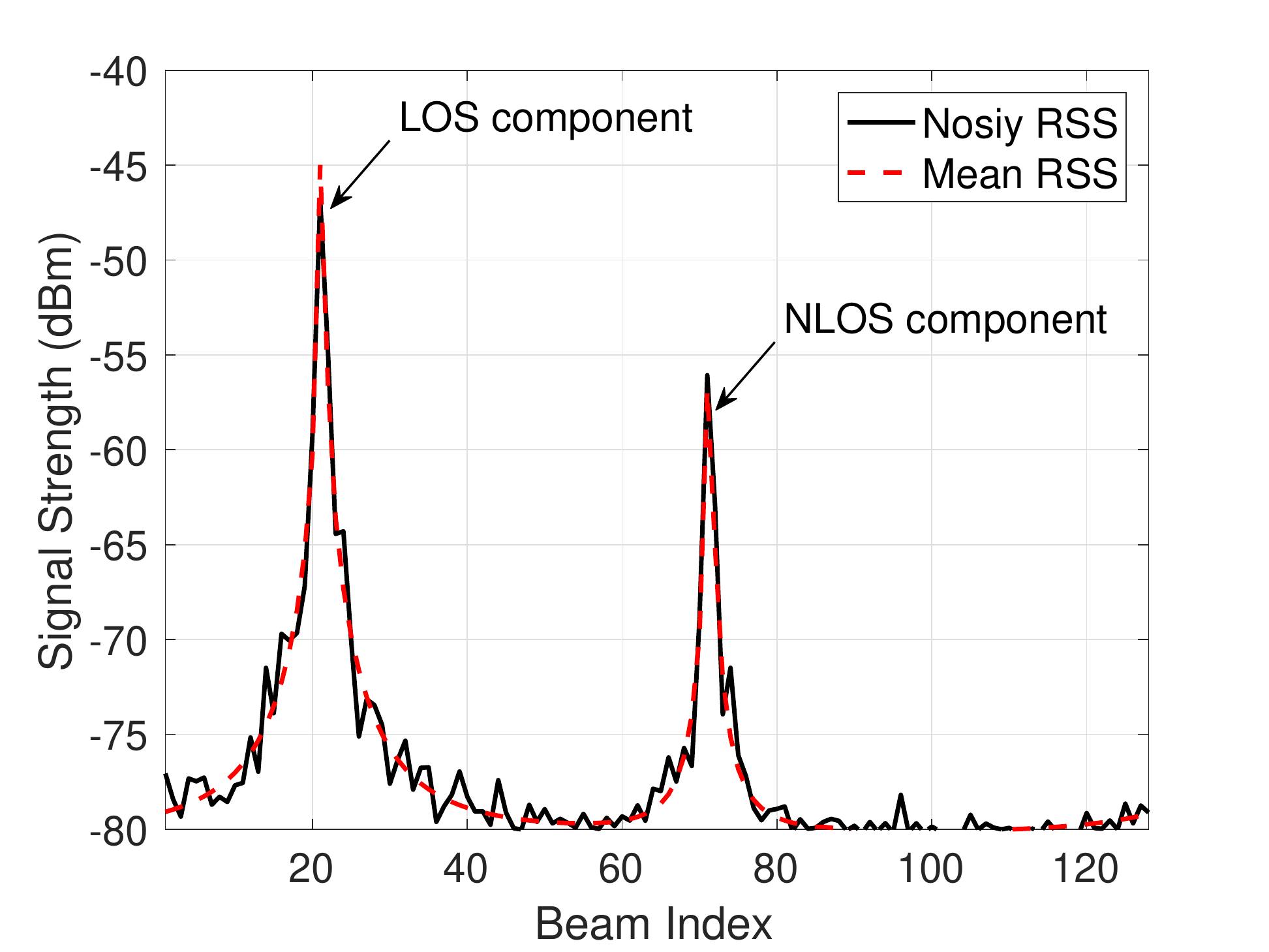}
	\vspace{-0.3cm}
	\caption{The RSS function over the beam space in a two-path channel with 128 beams. The peak caused by the LOS link is around 10 dB higher than that by the NLOS link.}
	\label{Fig:Graphical unimodality}
	\vspace{-0.7cm}
\end{figure}

\begin{remark}
	Theoretical analysis indicates that the RSS depends on the angular misalignment. As the angular misalignment of nearby beams is close, the RSS of nearby beams is similar such that nearby beams are highly correlated. 
	
	
\end{remark}

{Since the RSS function over the beam space is a multimodal function, the BA problem boils down to identifying the optimal point of a multimodal function. In other words, our goal is to find  optimal point $x^\star$ that maximizes multimodal reward function $f(x), x\in \mathcal{X}$. To solve this problem efficiently, the correlation structure of the reward function is leveraged. Specifically, the correlation structure is exploited based on a \emph{dissimilarity function} that captures the smoothness of  reward function \cite{NIPS}. }
	\begin{definition}
	{ \textbf{Dissimilarity}. For space $\mathcal{X}$, a dissimilarity function for $x_1\in \mathcal{X}$ and $x_2\in \mathcal{X}$ is defined as $q(x_1,x_2)=w\|x_1-x_2\|^\beta$, where $w>0$, $\beta>0$ and $\| \cdot\|$ denotes the Euclidean norm function. Note that $q(x,x)=0$ for $x\in \mathcal{X}$. }
	\end{definition}

	{The dissimilarity function is applied to characterize the discrepancy of two points in the reward function. Normally, two nearby points in the function have similar rewards, which means the dissimilarity between two nearby points is bounded. Such smoothness property of the reward function is exploited in the following algorithm design to accelerate the BA process. }

\subsection{Prior Knowledge}
In addition to the aforementioned correlation structure, some prior knowledge can be leveraged to further speed up the BA process. As the reward is impacted by wireless environments, channel fluctuation statistics reflects the underlying information of the wireless environments. Leveraging the channel fluctuation statistics can appropriately accommodate the reward uncertainty such that less exploration is required. Specifically, the variance of the channel fluctuation $\sigma^2$ is assumed to be known \emph{a priori} to accelerate the BA process. In practice, the prior knowledge can be obtained in the system initialization phase before the BA process is invoked. Practical mmwave systems also collect the variance of channel fluctuation periodically. Besides, since the channel statistical information changes slowly in static environments, there is no
need to frequently collect the information. It is worth noting that the proposed algorithm works even with coarse prior knowledge at the expense of slower convergence or lower beam detection accuracy, which is presented in Section \ref{sec: simulation results}.

\subsection{Hierarchical Beam Alignment (HBA) Algorithm}\label{sec: proposed algorithm}

As discussed, the mean reward function exhibits the multimodality structure, and hence we adapt and extend the hierarchical optimistic optimization (HOO) algorithm \cite{NIPS} to the BA problem. Due to the lack of prior knowledge, HOO adopts a large confidence margin to accommodate the reward uncertainty, which results in slow convergence. Similar to the well-known Bayesian principles in \cite{reverdy2014modeling}, we leverage the prior knowledge to obtain an appropriate confidence margin, which avoids unnecessary exploration and further accelerates convergence. The proposed HBA algorithm is sketched in Algorithm \ref{Alg:HBA scheme}. {In the algorithm, $Ber(0.5)$ represents a Bernoulli distributed random variable with a parameter of 0.5, which means that the random variable is equally likely to take values 0 and 1. In addition, $leaf(\mathcal{T})$ represents the leaf node of tree $\mathcal{T}$.}

The proposed algorithm is designed based on the correlation structure among beams. If a beam performs well, its nearby beams are highly likely to perform well too. {Hence, the core idea is to explore intensively around good beams while loosely in others.} For this purpose, a search tree is constructed, whose nodes are associated with search regions. A deeper node represents a smaller search region, as an illustrative example shown in Fig. \ref{fig:zooming_process}. The algorithm operates in discrete time slots, and the binary tree is constructed in an incremental manner. At each time slot, a new node is selected by a node selection process and added to the search tree.  {Once selected, the beam located in the selected node is measured, and then the corresponding reward is observed}. Then, the attributes of the search tree are updated based on the newly observed reward. In this way, the algorithm intelligently narrows the search region until the optimal beam is identified. It is worth noting that selecting a new node means exploring the region associated to the node, and the search tree explores the region based on previously selected beams and observed rewards.  


\begin{algorithm}[t]\label{Alg:HBA scheme}
	\caption{HBA algorithm}
	\LinesNumbered   
	\SetKwInOut{Input}{Input}
	\Input{ ${\zeta}$, $\rho_1$, $\gamma$ and $\sigma^2$ }
	\SetKwInOut{Output}{Output}
	\Output{$b^\star$}
	Initialization: Set $\mathcal{T}=\{(0,1)\}$, $Q_{2,1}=Q_{2,2}=+\infty$, $x_L=0$ and $x_H=1$\; 
	\For{\text{t=1,2,3...} }
	{
		$(h,j) \leftarrow (0,1)$, $\mathcal{P}\leftarrow \{(h,j)\}$\;
		{$\rhd$ New node selection}\\
		\While{$(h,i)\in \mathcal{T}_t$ }
		{		
			\begin{algorithmic} 
				\IF{ $Q_{h+1,2j-1}\left(t\right)>Q_{h+1,2j}\left(t\right)$}
				\STATE  $(h,j)\leftarrow (h+1,2j-1)$, update ${x_L}=x_a$\;
				\ELSIF{$Q_{h+1,2j-1}\left(t\right) <Q_{h+1,2j}\left(t\right)$}
				\STATE   $(h,j)\leftarrow (h+1,2j)$, update ${x_H}=x_a$\; 
				\ELSE \STATE  $(h,j)\leftarrow (h+1,2j-$$Ber(0.5))$,
				 update the search region\;
				
				\ENDIF
			\end{algorithmic}
			$\mathcal{P}\leftarrow \mathcal{P}\cup \{(h,j)\}$\;
		}
		$(H_t,J_t)\leftarrow (h,j)$; $\mathcal{T}_{t+1}= \mathcal{T}_t\cup \{(H_t,J_t)\}$;\\

		 {$\rhd$ Attributes update}\\
		{Measure the beam located in the center $C_{H_t,J_t}$, and observe the reward $r^{t}$\;}
		 
 $\forall (h,j)\in \mathcal{P}$, update $N_{h,j}\left(t\right)$ and $R_{h,j}\left(t\right)$ with \eqref{equ:N-values} and \eqref{equ: R_value}, respectively;\\
 $ \forall (h,j)\in \mathcal{T}_t$, update $E_{h,j}\left(t\right)$ with \eqref{equ:E_value};\\
 		$Q_{H+1,2J-1}\left(t\right)= Q_{H+1,2J}\left(t\right)= +\infty$; $\mathcal{\hat{T}}=\mathcal{T}_t$\;
		\For{\text{ $(h,j)\in \mathcal{\hat{T}}$} }
		{
			$(h,j)\leftarrow leaf(\mathcal{\hat{T}})$, update $Q_{h,j}\left(t\right)$ with \eqref{equ: Q_values update}, $\mathcal{\hat{T}} \leftarrow \mathcal{\hat{T}}\setminus (h,j) $;
		}	
	 {$\rhd$ Terminating condition}\\
	 	\begin{algorithmic} 
	 		\IF{$ x_H-x_L < {\zeta}/{N} $}			
	 			\STATE{Terminate beam search and select current beam $b^\star$}\;
	 		\ENDIF
	 	\end{algorithmic} 
	}
	
\end{algorithm}

Next, we elaborate the algorithm in detail. In the initialization phase, the beam space, $\mathcal{B}$, is mapped to a region $\mathcal{X}=[0,1]$, which is uniformly partitioned by each beam. Similarly, the RSS function, $r(b_i), \forall b_i\in \mathcal{B}$, is mapped to a normalized reward function, $f(x), \forall x\in \mathcal{X}$, within $[0,1]$. 
In the beginning, the search tree $\mathcal{T}$ only contains a root node $(0,1)$. The node in the tree is represented by $(h,j)$, where $h$ denotes the depth from the root node and $ j, 1\leq j \leq 2^h$ denotes the index at depth $h$. In addition, each node in the tree is associated with a region. Let ${C}_{h,j}$ represent the region of $(h,j)$. Specifically, the root node represents the entire region, i.e., ${C}_{0,1}=[0,1]$. Let $(h+1,2j-1)$ and $(h+1,2j)$ denote the left and the right child node of $(h,j)$, respectively. Two child nodes partition the region of their parent node. Consider ${C}_{h,j}=[{x}_L,x_H]$, the left child node is associated with a region ${C}_{h+1,2j-1}=[{x}_L,{x}_a]$ and the right child node is associated with a region ${C}_{h+1,2j}=[{x}_a,{x}_H]$, where $x_a=x_L+\left({x_H}-{x_L}\right)/{2}$ is the middle point of ${C}_{h,j}$. {The HBA algorithm operates in a ``zooming" manner, which intelligently narrows the search region via comparing the $Q$-values in the tree. The $Q$-value is designed based on the correlation structure of the reward function and the prior knowledge.}  
At time slot $t$, HBA consists of the following three phases: 

1. \emph{New node selection}. In this phase, a new node will be selected. Let $\mathcal{T}_t$ denote the tree at time $t$. {At each time slot, starting from the root node, the $Q$-values of two child nodes are compared until a new node $(H_t,J_t)\notin \mathcal{T}_t$ is selected. }
Specifically, traversing the tree, the child with a higher $Q$-value is chosen, otherwise breaking ties randomly (lines 5-6). The selected node is added to the tree, i.e., $\mathcal{T}_{t+1}= \mathcal{T}_t\cup \{(H_t,J_t)\}$, and the path from the root node to the selected node is stored in $\mathcal{P}$.  


2. \emph{Attributes update}. In this phase, the attributes of all the nodes in the tree are updated. {For the selected node in the previous phase, a beam located in the center of $C_{H_t,J_t}$ is measured and then the corresponding reward $r_{t}$ is obtained. Based on the newly observed reward, for node $(h,j)$, $Q_{h,j}$ is updated by the following steps}. 

Firstly, as the new node is the descendant of all the nodes in path $\mathcal{P}$, $N_{h,j}\left(t\right)$, which represents the number of times that ${(h,j)}$ has been selected until time slot $t$, is updated by 
\begin{equation}\label{equ:N-values}
N_{h,j}\left(t\right)= N_{h,j}\left(t-1\right)+1,\forall  (h,j)\in \mathcal{P}.
\end{equation}

Secondly, $R_{h,j}\left(t\right)$ represents the \emph{mean measured reward} of $(h,j)$ up to time slot $t$, which is updated by
\begin{equation}\label{equ: R_value}
R_{h,j}\left(t\right)=\frac{\left(N_{h,j}\left(t\right)-1\right)R_{h,j}\left(t-1\right)+r^{t}}{N_{h,j}\left(t\right)}
,\forall  (h,j)\in \mathcal{P}.
\end{equation}

{Thirdly, for each node in the tree, the \emph{initial estimated maximum mean reward} in region $C_{h,j}$, denoted by $E_{h,j}\left(t\right)$, is updated by, }
\begin{equation}\label{equ:E_value}
E_{h,j}\left(t\right)=
 \begin{cases}
 R_{h,j}\left(t\right)+\sqrt{\frac{2\sigma^2\log t}{N_{h,j}\left(t\right) }}+\rho_1\gamma^h ,        &\hfill   \text{if } N_{h,j}\left(t\right)>0 \\
+\infty, &\hfill \text{otherwise}
 \end{cases}
\end{equation}
where $\sqrt{\frac{2\sigma^2\log t}{N_{h,j}\left(t\right)}}$ is the confidence margin to accommodate for the uncertainty of rewards. As aforementioned, we adopt the Bayesian principle to design the confidence margin by leveraging the prior knowledge on the variance of channel fluctuation. {In \eqref{equ:E_value}, $\rho_1\gamma^h $ accounts for the maximum variation of the mean reward function in region $C_{h,j}$, where $\rho_1>0$ and $\gamma\in (0,1)$.} { This term is obtained via the correlation structure in the reward function. The maximum dissimilarity within region $C_{h,j}$ for the reward function is upper bounded by $\rho_1\gamma^h$, i.e., $\underset{x_1,x_2\in C_{h,j}}{\max} q(x_1,x_2)\leq \rho_1\gamma^h,\forall x_1,  x_2\in \mathcal{X}$, which holds due to the bounded diameter assumption in Section \ref{sec: regret_performance_analysis}.  The values of $\rho_1$ and $\gamma$ are selected based on extensive simulation trials. For a binary tree case, $\gamma$ is typically set to 0.5 \cite{NIPS}. Note that $E$-values of all the unexplored nodes are set to infinity.}

{Finally, the \emph{estimated maximum mean reward} in region $C_{h,j}$, $Q_{h,j}\left(t\right)$, should be recursively updated through the following bound}
\begin{equation}\label{equ: Q_values update}
Q_{h,j}\left(t\right)=
 \begin{cases}
 \min\{E_{h,j}\left(t\right),\max\{Q_{h+1,2j-1}\left(t\right),Q_{h+1,2j}\left(t\right)\}\}, &\hfill   \text{if } N_{h,j}\left(t\right)>0 \\
 +\infty, &\hfill \text{otherwise}
 \end{cases}
\end{equation}

This bound depends on two terms. The first term, $E_{h,j}\left(t\right)$, is an upper bound for $Q_{h,j}\left(t\right)$ due to the definition of  $E$-values. The second term, $\max\{Q_{h+1,2j-1}\left(t\right),Q_{h+1,2j}\left(t\right)\}$, is another valid upper bound of $Q_{h,j}\left(t\right)$. Since $C_{h,j}=C_{h+1,2j-1}\cup C_{h+1,2j-1}$, the maximum value between the $Q$-values in two subsets is the upper bound of $Q$-value in the union set. Combining both terms together, a tighter upper bound is obtained via taking the minimum value of these two bounds. 
Note that $Q$-values should be updated from the leaf node of the tree because $Q$-values of child nodes form the upper bound of their parent node (lines 12-15). 


3. \emph{Terminating condition}. As the tree is constructed over time, the search region gradually narrows as the depth of the tree increases. {When the search region is sufficiently small, i.e., $x_H-x_L<\zeta/N$ where $0<\zeta<1$, the BA process is terminated and the beam located in the final region is selected as the optimal beam.} The value of $\zeta$ should be carefully selected based on extensive simulation trials. Noteworthily, a larger $\zeta$ value results in faster convergence while lower beam detection accuracy.

\begin{remark}
	{A region attained a large $Q$-value represents that the potential maximum reward in the region is high, which means that the optimal beam (the maximum reward) locates in this region with a high probability. Hence, the HBA algorithm explores intensively in the regions with high estimated maximum rewards ($Q$-values) while loosely in others. In this way, the HBA algorithm is more efficient than the exhaustive search method, which accelerates the BA process.}
	
\end{remark}

\begin{figure}[t]
	\centering
	\renewcommand{\figurename}{Fig.}
	\begin{subfigure}[Zooming process]{
			\label{fig:zooming_process}
			\includegraphics[width=0.45\textwidth]{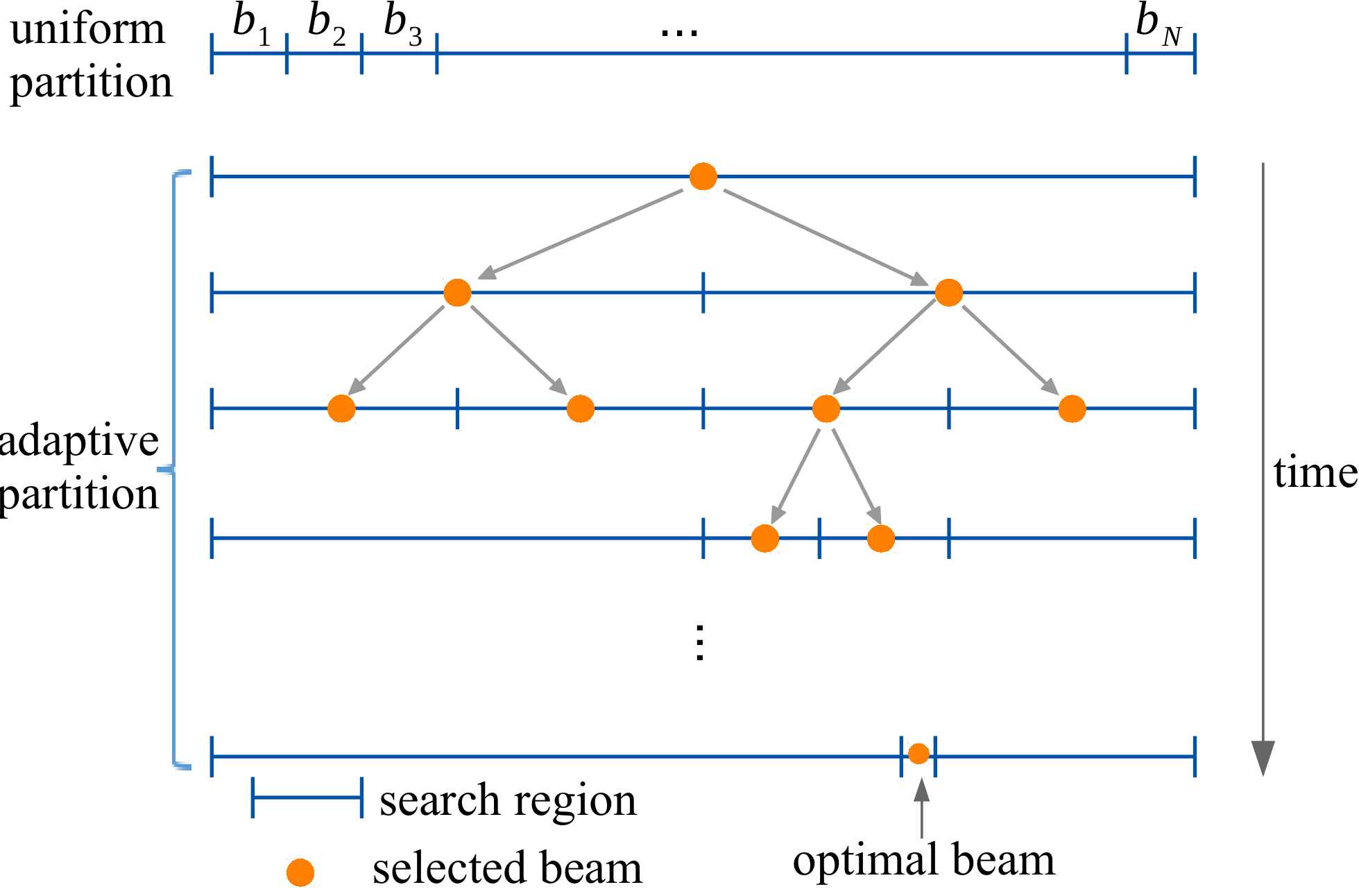}}
	\end{subfigure}%
	~
	\begin{subfigure}[Sequentially selected beams]{
			\label{fig:batch_measurement}
			\includegraphics[width=0.45\textwidth]{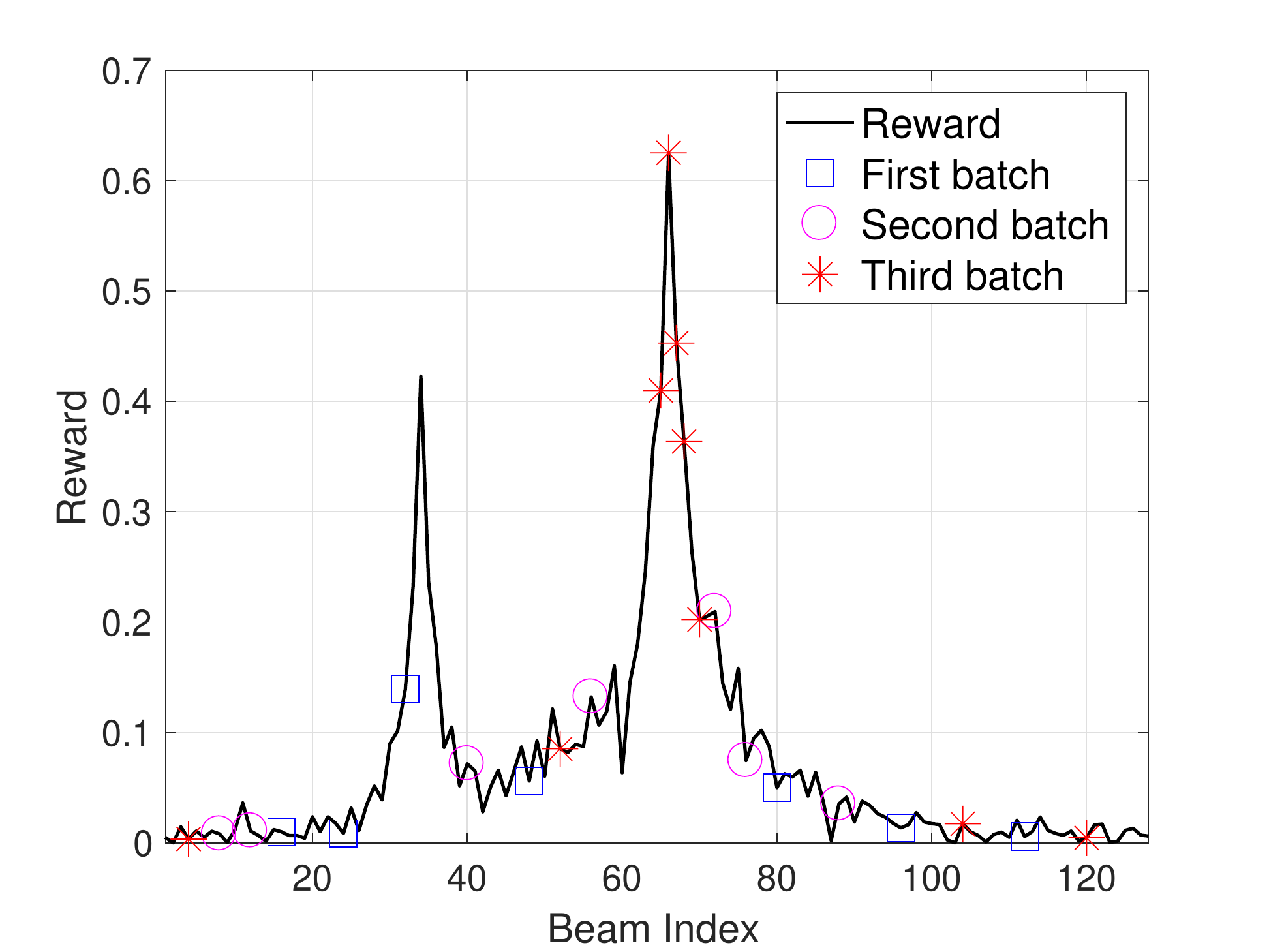}}
	\end{subfigure}
	\vspace{-0.3cm}
	\caption{Illustrative examples of the HBA algorithm. (a) The proposed algorithm operates in a ``zooming" manner. (b) The region that contains the dominant peak is explored intensively, while others are explored loosely.}
	\label{Fig:illustration}
		\vspace{-0.7cm}
\end{figure}

\textbf{Illustrative example}: For better understanding of HBA, we provide two illustrative examples in Fig. \ref{Fig:illustration}. Firstly, as shown in Fig. \ref{fig:zooming_process}, HBA operates similar to a ``zooming" process. At the beginning, the search region is the entire region, which is uniformly partitioned by the beams. As time goes by, the search region is adaptively partitioned, and the algorithm gradually zooms to the region that contains the optimal beam. 
Secondly, sequentially selected beams in the BA process are depicted in Fig. \ref{fig:batch_measurement}. The selected beams are divided into three batches according to the timeline. The first batch beams locate randomly in the whole region. The second batch beams get closer to the dominant peak. The last batch beams mainly focus around the optimal beam. We observe that the proposed algorithm explores intensively in the regions that contain good beams while loosely on the others.



\subsection{Complexity Analysis}
At time slot $T$, $\mathcal{T}_t$ contains $T$ nodes as the tree increments by one node at each time slot. Hence, the storage complexity of the proposed algorithm is linear, i.e., $O(T)$. In addition, the attributes of all the nodes in the tree should be updated at each time slot, and hence the running time at each time slot is also linear. As the algorithm runs $T$ time slots, the computational complexity of the HBA algorithm is a quadratic complexity $O(T^2)$. With the terminating condition, the tree is a finite tree and hence both storage complexity and computational complexity are bounded.
\section{Regret Performance Analysis}\label{sec: regret_performance_analysis}
In this section, we analyze the upper bound on the cumulative regret for the proposed algorithm. For the tractability of regret analysis, we have the following two assumptions.
\begin{assumption}
		\textbf{(Weak Lipschitz)} For any $x$ around the optimal $x^\star$, there exist constants $c_H>0$ and $\alpha>0$ such that 
	\begin{equation}
	f^\star-f(x)\leq c_H\|x^\star-x\|^\alpha
	\end{equation}
\end{assumption}	
{where $f^\star=f(x^\star)$ represents the optimum of function $f(\cdot)$. }This assumption indicates that the reward function satisfies the week Lipschitz condition, which can avoid sharp valleys around the optimal point that induces high regret. Furthermore, the weak Lipschitz condition is mild, which only has the impact on the region in the vicinity of the optimal value. This assumption is well justified in many practical applications \cite{shen2018generalized}.

\begin{assumption}
\end{assumption}
\begin{enumerate}	
	\item 	{\textbf{(Bounded diameter)} For a region, $C_{h,j}$, of depth $h$, the diameter of the region is defined as $D(C_{h,j})=\underset{x,y\in C_{h,j}}{\max} q(x,y)$. The diameter of the region is upper bounded by $ \rho_1 \gamma^h$ for constants $\rho_1>0$ and $0<\gamma<1$. }
	
	\item 	\textbf{(Well-shaped region)} For a region, $C_{h,j}$, of depth $h$, the region contains a ball with a radius of $\rho_2 \gamma^h$ which locates in the center of $C_{h,j}$.
	\end{enumerate}

The bounded diameter condition is to upper bound the maximum variation of $f(x)$ within the region $C_{h,j}$. In contrast, the well-shaped region condition is to lower bound the minimum variation of $f(x)$ within the region $C_{h,j}$. Note that any region in the reward function satisfies the bounded diameter and well-shaped region conditions \cite{NIPS}, which are utilized to bound the cumulative regret in the following analysis. 

%

%

\begin{definition}
	\textbf{$\epsilon$-optimal}. Let $f_{h,j}^\star= \underset{x \in {C_{h,j}}}{\max} f(x)$ be the optimal reward in $C_{h,j}$. If $f_{h,j}^\star>f^\star-\epsilon_{h,j}$, $C_{h,j}$ is the $\epsilon_{h,j}$-optimal region. 
\end{definition}

For example, if $\epsilon_{h,j}=0$, $C_{h,j}$ is the optimal region where optimal value $x^\star$ locates. Otherwise, if $\epsilon_{h,j}>0$, $C_{h,j}$ is a sub-optimal region. Let $\epsilon_{h,j}$ represent the \emph{suboptimality} of $(h,j)$. 

To obtain the regret bound, we first provide the following lemma.
\begin{lemma}\label{lemma:expected_N_bound}
	For any node $(h,j)$ whose suboptimality is larger than $\rho_1 \gamma^h$, the expected number of times that $(h,j)$ has been visited until time slot $T$, is upper bounded by
	\begin{equation}
	\mathbb{E}\left[N_{h,j}(T)\right] \leq\frac{8\sigma^2 \log T}{\left(\epsilon_{h,j} -\rho_1 \gamma^h\right)^2}+c
	\end{equation}
	where $c$ is a constant.
\end{lemma}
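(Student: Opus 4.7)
The plan is to adapt the classical upper-confidence-bound deviation argument to the hierarchical setting, following the spirit of the HOO regret analysis but with the Gaussian prior-knowledge confidence width $\sqrt{2\sigma^2 \log t / N_{h,j}(t)}$ in place of the bounded-reward Hoeffding term. Fix a node $(h,j)$ with suboptimality $\epsilon_{h,j} > \rho_1\gamma^h$, and let $\Delta := \epsilon_{h,j} - \rho_1\gamma^h > 0$. Define the threshold $u := \lceil 8\sigma^2 \log T / \Delta^2 \rceil$. The goal is to show that $(h,j)$ is pulled at most $u$ times plus a constant number of ``bad'' deviation events in expectation.

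First I would write $N_{h,j}(T) \le u + \sum_{t=u+1}^{T} \mathbf{1}\{(h,j)\in \mathcal{P}_t, N_{h,j}(t-1)\ge u\}$, where $\mathcal{P}_t$ is the root-to-leaf path selected at time $t$. By the selection rule (lines 5--11 of Algorithm~1), $(h,j)\in\mathcal{P}_t$ only if, at every ancestor along $\mathcal{P}_t$, the $Q$-value of the child leading toward $(h,j)$ dominates that of its sibling. In particular, since an ancestor of the optimal region must be a sibling (or contain such a sibling) at some depth, selecting $(h,j)$ implies $Q_{h,j}(t) \ge Q_{h^\star,j^\star}(t)$ for some optimal-path node $(h^\star,j^\star)$. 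Using the recursive bound \eqref{equ: Q_values update}, $Q_{h^\star,j^\star}(t) \ge f^\star$ on a ``good'' event where all $E$-values on the optimal path are valid upper bounds. Hence on a ``bad'' selection of $(h,j)$ with $N_{h,j}(t-1)\ge u$, we must have $E_{h,j}(t) \ge f^\star$, i.e.,
\begin{equation*}
R_{h,j}(t) + \sqrt{\tfrac{2\sigma^2 \log t}{N_{h,j}(t)}} + \rho_1\gamma^h \ge f^\star.
\end{equation*}
Rearranging and using $\mu_{h,j} := \mathbb{E}[f(X)\mid X\in C_{h,j}] \le f^\star - \epsilon_{h,j}$ yields $R_{h,j}(t) - \mu_{h,j} \ge \epsilon_{h,j} - \rho_1\gamma^h - \sqrt{2\sigma^2 \log t/N_{h,j}(t)}$. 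For $N_{h,j}(t)\ge u$ the square-root term is at most $\Delta/2$, so this forces the empirical deviation $R_{h,j}(t)-\mu_{h,j} \ge \Delta/2 = \sqrt{2\sigma^2 \log T / u}\cdot(\text{factor})$, which is exactly a tail event for the empirical mean of $N_{h,j}(t)$ i.i.d.\ Gaussian rewards with variance $\sigma^2$.

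Next I would apply the Gaussian/sub-Gaussian concentration inequality together with a union bound over $N_{h,j}(t)\in\{u,\ldots,t\}$ to upper-bound the probability of each such bad event by $O(t^{-2})$, mirroring the standard UCB calculation. A symmetric argument handles the ``good event on the optimal path'' complement: at each depth along the optimal branch, a failure of $Q_{h^\star,j^\star}(t)\ge f^\star$ requires at least one $E$-value on the path to under-estimate, again a Gaussian tail event of probability $O(t^{-2})$ per node, and the depth of the tree at time $t$ grows only logarithmically. Summing $\sum_{t=1}^{\infty} O(t^{-2})$ gives a finite constant $c$, producing the stated bound $\mathbb{E}[N_{h,j}(T)] \le 8\sigma^2 \log T/\Delta^2 + c$.

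The main obstacle I anticipate is the coupling between the tree-structured $Q$-values and the node one wishes to bound: $Q_{h,j}(t)$ is defined recursively in terms of descendant $Q$-values that do not exist until those descendants have been explored. The cleanest fix is to carry through the argument using $E$-values (which are well-defined for every visited node) and exploit that $Q_{h,j}(t) \le E_{h,j}(t)$, so a bad selection of $(h,j)$ always implies a failure of $E_{h,j}(t)$; and, symmetrically, that the $Q$-value of the optimal-path ancestor is sandwiched between its own $E$-value and the $E$-values of its explored descendants, so the validity-of-upper-bound event along the optimal path reduces to a depth-by-depth Gaussian concentration check. Verifying this sandwich carefully --- in particular that the ``$+\infty$'' initialization for unexplored children in \eqref{equ: Q_values update} does not break the optimism property --- is where the bulk of the technical work lies; once this is in place, the final summation is routine.
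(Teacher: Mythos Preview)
Your proposal is correct and follows essentially the same route as the paper: the threshold decomposition $N_{h,j}(T)\le u+\sum_t \mathbf{1}\{\cdot\}$ with $u=\lceil 8\sigma^2\log T/\Delta^2\rceil$, the split of a bad pull into ``$E_{h,j}(t)\ge f^\star$'' versus ``some $E$-value on the optimal path under-estimates $f^\star$'', and summation of the resulting tail probabilities. The only technical wrinkle to tighten is your ``i.i.d.\ Gaussian'' phrasing---the samples in $C_{h,j}$ have varying means $f(X_s)\le f^\star_{h,j}$ rather than a common $\mu_{h,j}$, so the paper (and the HOO analysis you are adapting) uses the Hoeffding--Azuma martingale inequality on the centered noise $Y_s-f(X_s)$; once you make that substitution the rest of your outline goes through unchanged.
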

\begin{proof}
The detailed proof is given in Appendix \ref{appendix:lemma bound proof}.
\end{proof}
\begin{remark}
		From Lemma \ref{lemma:expected_N_bound}, the number of times that a suboptimal node has been visited logarithmically increases with time, which implies the cumulative regret of the proposed algorithm is sublinear. In addition, the number of times that a suboptimal node has been visited, depends on the variance of the channel fluctuation. A larger variance of the channel fluctuation implies a more noisy wireless environment, which yields more exploration efforts to remove the reward uncertainty.
\end{remark}

 Based on above lemma, an upper bound is obtained in the following.
\begin{theorem}\label{theorem: regret_bound}
	The upper bound on the cumulative regret of HBA is 
	\begin{equation}
	R^\pi\left({T}\right)=O\left(\sqrt{T\log T}\right).
	\end{equation}
\end{theorem}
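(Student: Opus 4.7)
My plan is to adapt the standard regret decomposition used in the HOO analysis \cite{NIPS}, leveraging Lemma \ref{lemma:expected_N_bound} as the per-node visit bound and the smoothness assumptions to control how many nodes contribute nontrivial regret at each depth. The idea is to fix a cutoff depth $H$ (to be tuned later), and split the explored portion of the tree $\mathcal{T}_T$ into three disjoint groups: (i) internal nodes at depth $h \le H$ that are ``near-optimal'' in the sense $\epsilon_{h,j} \le \rho_1\gamma^h$; (ii) nodes at depth $h \le H$ with $\epsilon_{h,j} > \rho_1\gamma^h$; and (iii) nodes whose ancestor chain crosses below depth $H$. Every play at time $t$ falls along the unique path $\mathcal{P}$ from the root to the newly added leaf, so its instantaneous regret $f^\star - f(x_t)$ can be charged to the deepest node on $\mathcal{P}$ in one of these three categories.

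For group (i), the weak Lipschitz condition together with the bounded-diameter assumption gives $f^\star - f(x) \le c \rho_1 \gamma^h$ for any $x \in C_{h,j}$, so each visit to such a node contributes at most a constant times $\gamma^h$ to the regret. To count how many such nodes can exist at depth $h$, I would invoke the well-shaped region assumption, which forces the near-optimal regions at depth $h$ to be essentially disjoint balls of radius $\rho_2 \gamma^h$ inside the $O(\gamma^h)$-superlevel set of $f$. Combined with the weak Lipschitz inequality around $x^\star$, this superlevel set has diameter $O(\gamma^{h\alpha/\alpha}) = O(\gamma^h)$ in the one-dimensional beam space $\mathcal{X} = [0,1]$, so the number of such nodes at depth $h$ is bounded by a constant (i.e., the near-optimality dimension is $0$). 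Summing the $O(\gamma^h)$ per-play regret over group (i) contributes $O(\gamma^H \cdot T)$ to the total.

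For group (ii), I would apply Lemma \ref{lemma:expected_N_bound} directly: since $\epsilon_{h,j} > \rho_1 \gamma^h$ and the near-optimal subtree at depth $h-1$ has only a constant number of nodes, the number of suboptimal depth-$h$ children to account for is also $O(1)$. Each such node is visited at most $O(\sigma^2 \log T / (\epsilon_{h,j} - \rho_1\gamma^h)^2)$ times and contributes at most $\epsilon_{h,j}$ per visit, which after using $\epsilon_{h,j} \le 2\rho_1\gamma^{h-1}$ (otherwise its parent would not be near-optimal) yields a per-depth regret of $O(\sigma^2 \gamma^{-h} \log T)$. Summing $h = 1, \ldots, H$ gives $O(\sigma^2 \gamma^{-H} \log T)$. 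Group (iii) is handled trivially since any such node lies inside a depth-$H$ region of diameter $\rho_1\gamma^H$, so its per-visit regret is $O(\gamma^H)$, contributing at most $O(T\gamma^H)$ overall.

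Combining the three pieces, the cumulative regret is bounded by $O(T\gamma^H + \gamma^{-H}\log T)$, and balancing the two terms by choosing $\gamma^H \asymp \sqrt{\log T / T}$ yields $R^\pi(T) = O(\sqrt{T \log T})$. The main obstacle I anticipate is a rigorous count of near-optimal nodes at each depth: this relies on translating the weak Lipschitz exponent $\alpha$, the diameter constant $\rho_1$, and the well-shaped radius $\rho_2$ into a covering-type argument that shows the near-optimality dimension is zero in our setting. Care is also needed because the multimodality of $f$ (Corollary \ref{proposition:multimodal}) means the $\epsilon$-optimal set is a union of peaks, not a single interval, so the covering constant depends on the number of paths $L$, but not on $T$; since $L$ is fixed, this only affects the leading constant in the $O(\sqrt{T\log T})$ bound.
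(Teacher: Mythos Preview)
Your proposal follows essentially the same route as the paper: a three-way partition of the explored tree at a cutoff depth $H$ into near-optimal shallow nodes (the paper's $\mathcal{T}_2$), first-suboptimal shallow nodes whose visits are bounded via Lemma~\ref{lemma:expected_N_bound} (the paper's $\mathcal{T}_3$), and the subtree below $\Phi_H$ (the paper's $\mathcal{T}_1$); then balance $T\gamma^H$ against $\gamma^{-H}\log T$ to obtain $O(\sqrt{T\log T})$.

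One technical step in your sketch would fail as written and is worth flagging, since you already identify it as the main obstacle. The weak Lipschitz assumption $f^\star - f(x) \le c_H\|x^\star - x\|^\alpha$ is an \emph{upper} bound on the drop of $f$ near $x^\star$; it makes the $\epsilon$-superlevel set \emph{larger}, not smaller, so it cannot by itself show that the near-optimal region at depth $h$ has diameter $O(\gamma^h)$ or that the near-optimality dimension is zero. The paper does not argue this directly either: it imports the HOO packing bound $|\Phi_h|\le c_1(\rho_2\gamma^h)^{-\kappa}$ with $\kappa = 1/\beta - 1/\alpha$ from \cite{NIPS} and then sets $\alpha=\beta$ (matching the dissimilarity exponent to the smoothness exponent) so that $\kappa=0$. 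To make your covering argument rigorous you need either that same matching of exponents or an additional \emph{lower}-bound smoothness condition $f^\star - f(x)\ge c\|x^\star - x\|^\alpha$ near each peak; your remark that multimodality only changes the constant by a factor of $L$ is correct once such a condition is in place.
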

\begin{proof}
	The detailed proof is given in Appendix \ref{appendix:regret bound proof}.
\end{proof}
\begin{remark}
	Theorem \ref{theorem: regret_bound} indicates the expected cumulative regret of HBA is sublinear in the time horizon $T$, i.e., $\lim\limits_{T \to \infty}R^\pi(T)/T=0$. Since the per-slot regret diminishes over time, the proposed algorithm is asymptotically optimal. Hence, the proposed algorithm converges to the optimal beam over time. Moreover, for finite time horizon $T$, the regret bound characterizes the convergence speed of the proposed algorithm. 


\end{remark}

\section{Simulation Results}\label{sec: simulation results}
\subsection{Simulation Setup}

We simulate an IEEE 802.11ad system, operating at 60 GHz with a bandwidth of $2.16$ GHz \cite{wu2019beef}. Consider an outdoor scenario, such as university campus, where the transmission distance between the transmitter and the receiver is set to 20 m unless otherwise specified. The average effective isotropically radiated power (EIRP) $P_{e}$ is fixed at 50 dBm\footnote{For outdoor applications with the high antenna gain, the average EIRP limit is up to 82 dBm \cite{FCC}.}, which is consistent with FCC regulations for 60 GHz unlicensed bands \cite{FCC, du2017much}. Taking the directional antenna gain into consideration, the transmit power is $P=P_{e}-10\log_{10}N$. For instance, the transmit powers are set to around 32 dBm and 23 dBm for 64 and 512 antenna arrays, respectively. It is worth noting that the mmwave channel is sparse, and hence we set the maximum number of channel paths to 5, which consists of one dominant LOS path and four NLOS paths. 
For the LOS path, the path loss is modeled as 
\begin{equation}
PL(dB)=32.5+20\log_{10}(f)+10\xi \log_{10}(d)+\chi
\end{equation}
where $f$, $\xi$, $d$, and $\chi$ represent the carrier frequency, path loss exponent, transmission distance, and shadow fading, respectively. The shadow fading follows $N(0,\sigma^2)$ where $\sigma$ is set to 2 dB \cite{Channel_model}. Note that the channel fluctuation in the simulation is mainly caused by the shadow fading. In addition, according to practical in-field measurements, NLOS paths suffer around 10 dB more path loss than the LOS path \cite{maltsev2009experimental}. We assume that the extra NLOS path loss follows a uniform distribution within $[7,13]$ dB. 
 Furthermore, for the HBA algorithm, the RSS within $[-80,-20]$ dBm is mapped to a reward within $[0,1]$. The algorithm parameters, $\rho_1$, $\gamma$, and $\zeta$ are set to 3, 0.5, and 0.1, respectively, based on extensive simulation trials. Important simulation parameters are listed in Table \ref{Simulation parameters}. We evaluate the performance via Monte-Carlo simulations. Simulation results are averaged based on 50000 samples with different channel fading and locations. The proposed HBA algorithm is compared to the following benchmarks: 

\begin{table}[t]
	\small
	\centering
	\caption{Simulation parameters.}
	\label{Simulation parameters}
	\begin{tabular}{cc|cc}
		\hline
		\hline
		\textbf{Parameter} & \textbf{Value}&\textbf{Parameter} & \textbf{Value} \\
		\hline
		Noise spectrum density $(N_o)$& $ -174$ dBm/Hz & System bandwidth $(W)$& $2.16$ GHz\\
		Carrier frequency $(f)$ & $ 60$ GHz &Path loss exponent $(\xi)$& 1.74\\ 
		Shadowing fading variance $(\sigma)$ & 2 dB&Signal range & $\left [-80, -20 \right ]$ dBm\\
		SSW frame duration $(T_{SSW})$ & 15.8 \emph{us}&Beacon interval duration $(T_{BI})$& 100 \emph{ms}\\
		Number of beams $(N)$& \{8-512\} & EIRP $(P_{e})$ & 50 dBm\\
		Number of paths $(L)$ & \{1-5\}&Algorithm parameters $(\rho_1,\gamma)$ & $(3,0.5)$\\
		Terminating condition threshold $(\zeta)$ & 0.1&Time horizon $(T)$ & 1000 time slots\\
		Extra NLOS path loss & $U(7,13)$ dB&Transmission distance $(d)$ & $20$ m\\
		\hline
		\hline
	\end{tabular}
\end{table}

%


%

\begin{itemize}

	\item \textbf{IEEE 802.11ad} \cite{Ad_standard}: In this industrial method, one side (transmitter or receiver) scans the beam space, while the other side keeps omni-directional.
	\item \textbf{UCB} \cite{auer2002finite}: The celebrated algorithm selects the beam without exploiting both correlation structure and prior knowledge. {The confidence margin is $\eta_u\sqrt{{2\log t}/{N_{b_i}(t)}}$, where the learning rate $\eta_u$ is set to 0.2 based on extensive simulation trials.}
	\item \textbf{Unimodal beam alignment (UBA)} \cite{infocom2018efficient}:  The algorithm exploits the unimodal structure among beams to perform BA. 
	Hence, it works in a ``hill-climbing" manner, which selects the best beam among the neighboring beams at each time slot. 
	
	\item \textbf{HOO} \cite{NIPS}: The algorithm selects the beam by exploiting beam correlation, without the prior knowledge. The confidence margin is $\eta_h\sqrt{2\log t/N_{h,j}{(t)}}+c_1\gamma^h$. {Here, the learning rate $\eta_h$ is set to 0.1, which is chosen based on extensive simulation trials.}
	%
\end{itemize}

\subsection{Regret Performance}
\begin{figure*}[t]
	\centering
	\renewcommand{\figurename}{Fig.}
	\begin{subfigure}[Cumulative regret performance comparison.]{
			\label{fig:cumulative_regret_L=2}
			\includegraphics[width=0.45\textwidth]{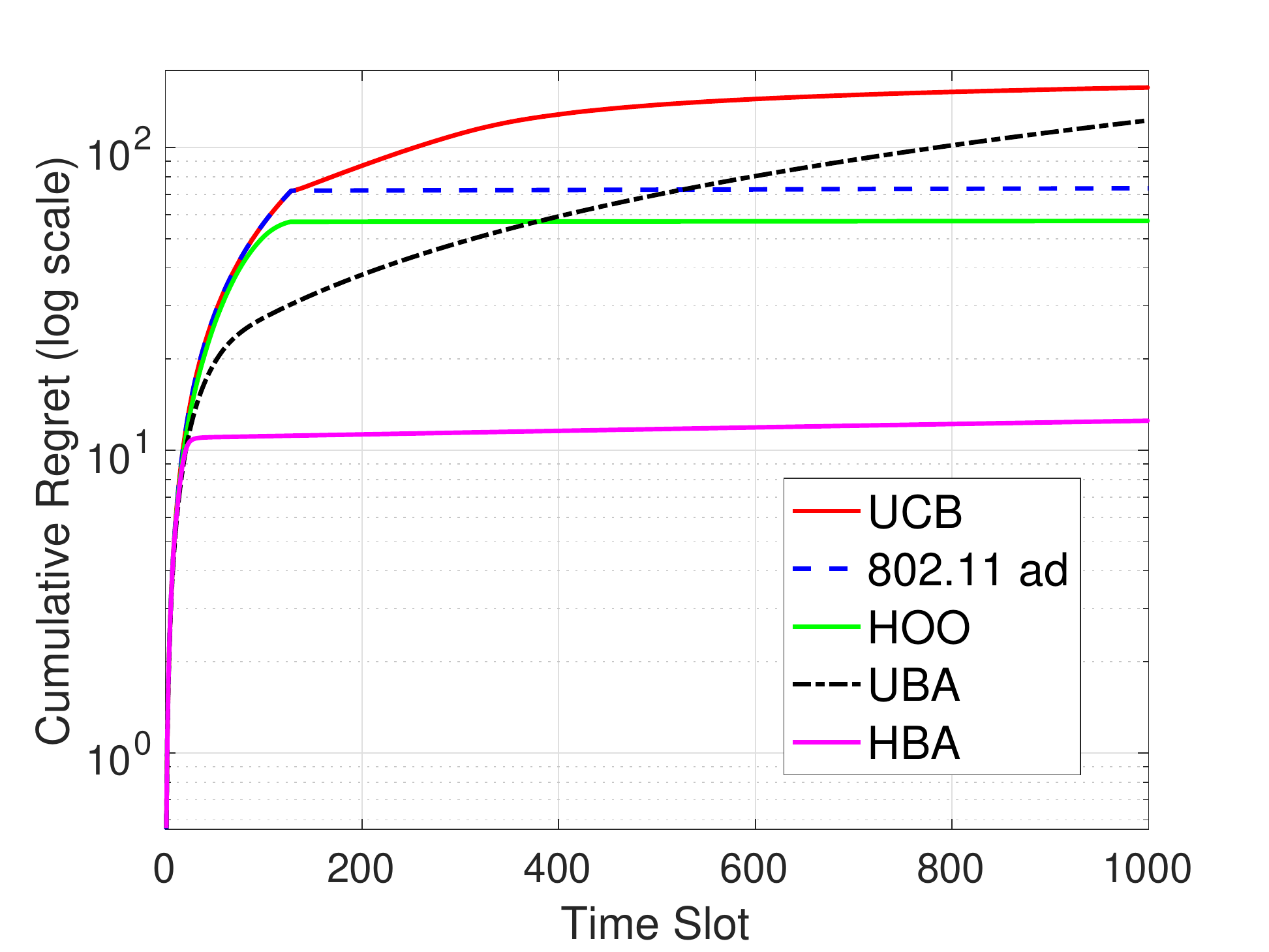}}
	\end{subfigure}%
	~
	\begin{subfigure}[Impact of the channel fluctuation distribution and variance.]{
			\label{fig:HBA_regret_vs_distribution_and_variance}
			\includegraphics[width=0.45\textwidth]{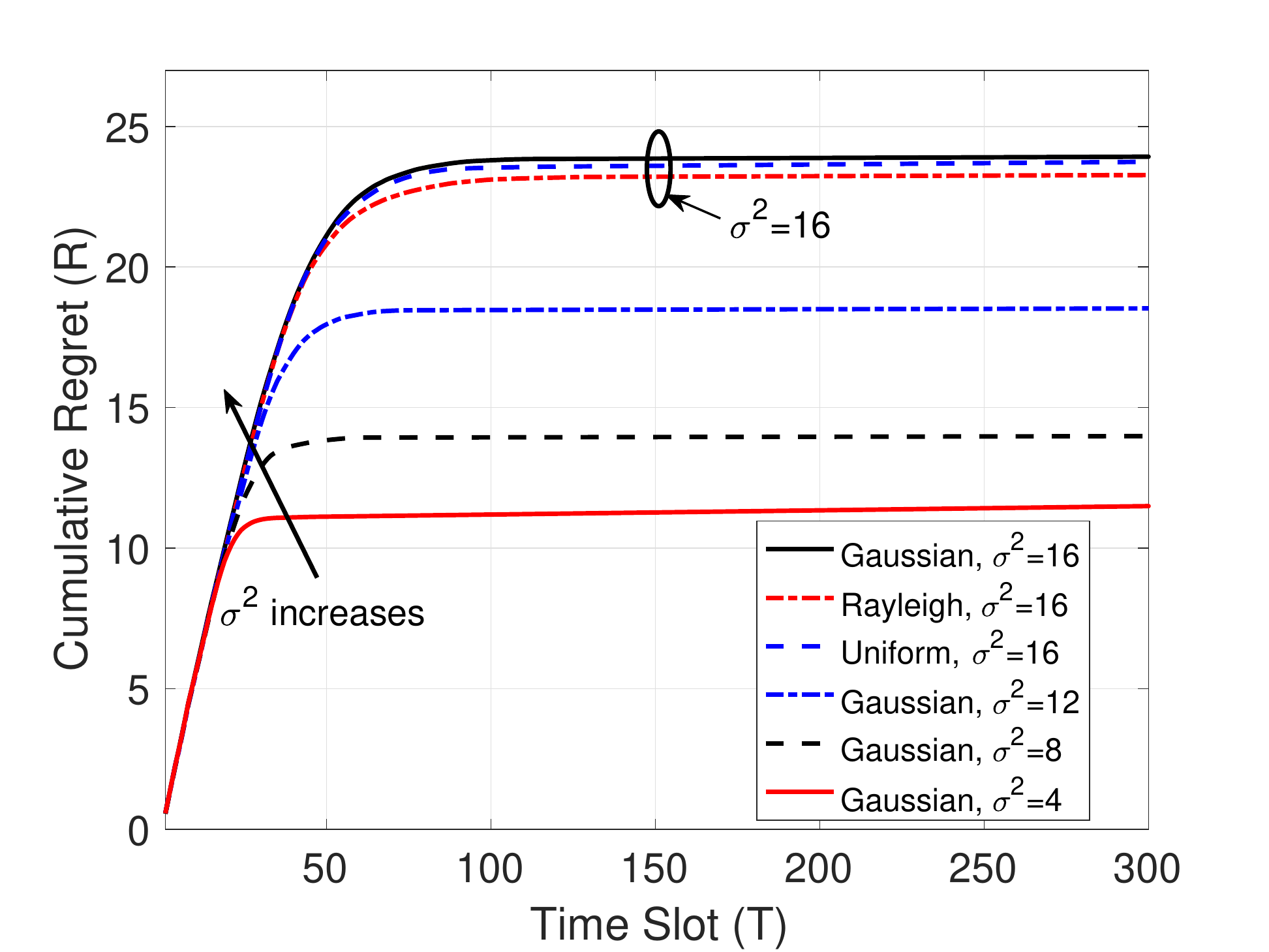}}
	\end{subfigure}
	\caption{Cumulative regret performance in the multipath channel.}
	\label{Fig:Cumulative regret performance with parameters}
	\vspace{-0.7cm}
\end{figure*}
Figure \ref{fig:cumulative_regret_L=2}  shows the cumulative regret performance in two-path channels. 
Several important observations can be obtained from simulation results. First of all, HBA significantly outperforms other benchmarks. A ``bounded regret" behavior is observed, which complies with the theoretical results in Theorem \ref{theorem: regret_bound}. In addition, HBA converges much faster than other benchmarks. Specifically, HBA only takes around 25 time slots to converge to the optimal beam. This is because HBA exploits both correlation structure and prior information to accelerate the BA process, while other benchmarks only exploit correlation structure or not. It is interesting to note that, as time goes by, the UBA algorithm performs even worse than the BA method in IEEE 802.11ad which does not exploit the correlation structure. The reason is that the UBA algorithm is designed based on the unimodal structure among beams, while the reward function evolves to a multimodal structure in the multipath channel. This model mismatch results in worse performance than not exploiting the correlation structure at all.


We further evaluate the impact of the channel fluctuation distribution on the regret performance in Fig. \ref{fig:HBA_regret_vs_distribution_and_variance}. To evaluate the dependency of the Gaussian distribution, the performance under Gaussian distribution is compared to that under two well-adopted non-Gaussian distributions, i.e., uniform distribution and Rayleigh distribution. The performance under non-Gaussian settings is very close to that under the Gaussian distribution, which means that the proposed algorithm can be applied in various settings. Furthermore, the impact of the channel fluctuation variance ($\sigma^2$) is studied in Fig. \ref{fig:HBA_regret_vs_distribution_and_variance}. As expected, the cumulative regret increases as the variance increases, because more exploration efforts are required in highly fluctuated channels. 

\subsection{Measurement Complexity and Beam Detection Accuracy}


\begin{figure*}[t]
	\centering
	\renewcommand{\figurename}{Fig.}
	\begin{subfigure}[{Number of beam measurements in the single-path channel}]{
			\label{fig:measurement_vs_array_size}
			\includegraphics[width=0.305\textwidth]{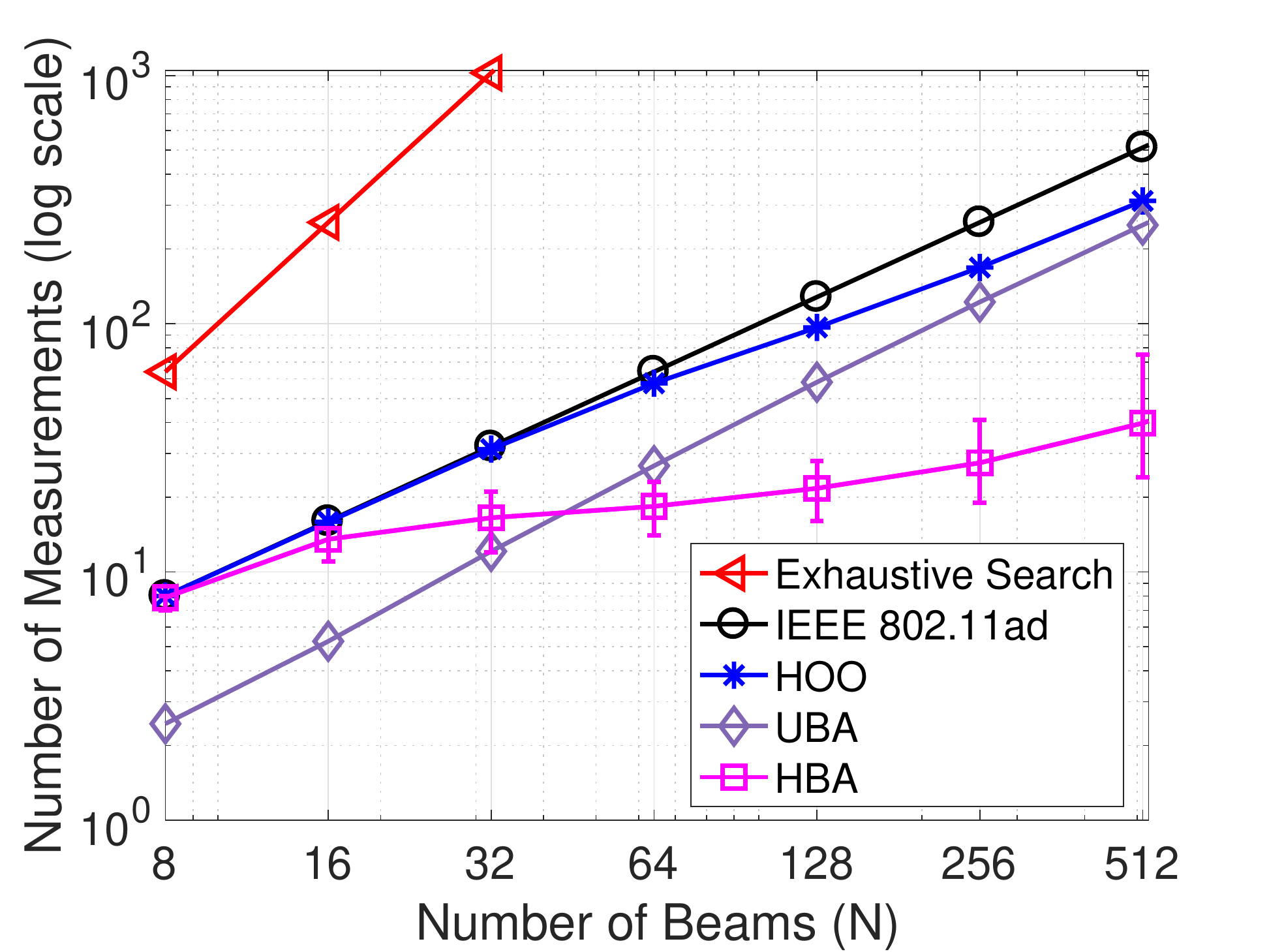}}
	\end{subfigure}%
	~
	\begin{subfigure}[Number of beam measurements in the multipath channel]{
			\label{fig:Num_measurement_vs_beams_paths.}
			\includegraphics[width=0.305\textwidth]{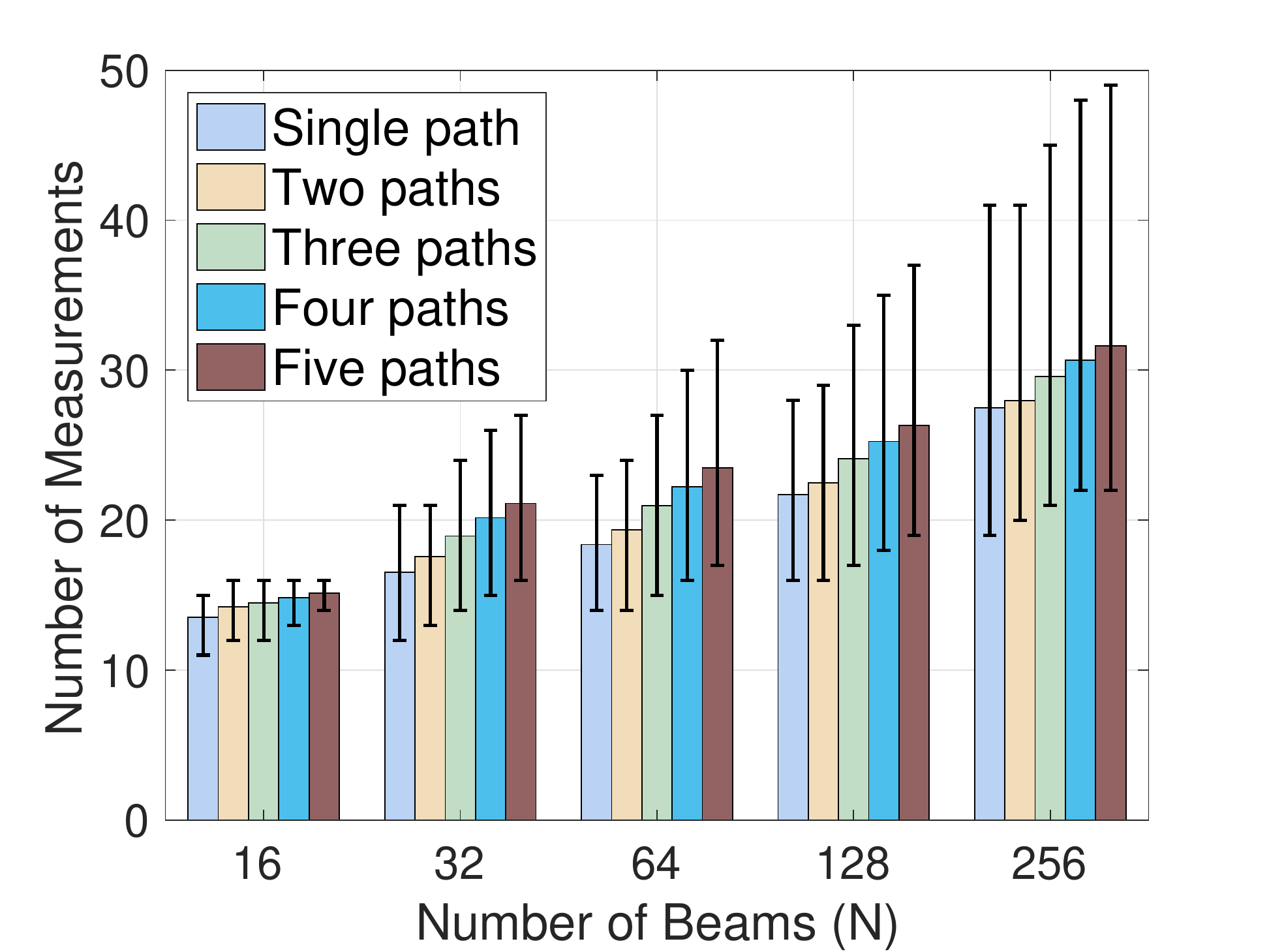}}
	\end{subfigure}%
	~
	\begin{subfigure}[Beam detection accuracy in the multipath channel]{
			\label{fig:HBA_accuracy}
			\includegraphics[width=0.305\textwidth]{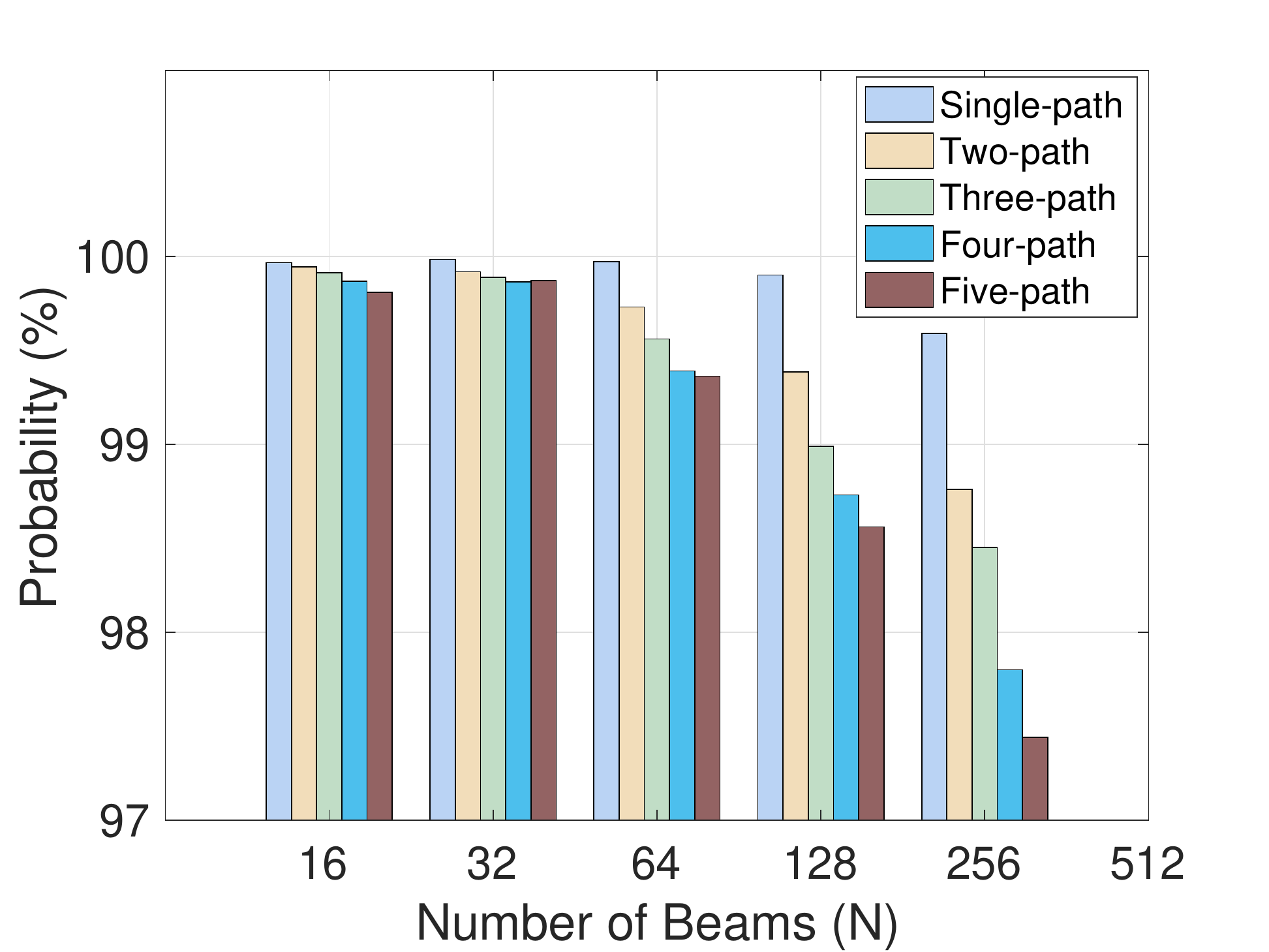}}
	\end{subfigure}
	\caption{Performance comparison with respect to the number of paths. Error bars show the 90 percentile performance.}
	\label{Fig:performance_paths_beams.}
\end{figure*}

The regret performance only reflects the bounded fact of regret, not necessarily the actual performance. Next, we evaluate the performance of HBA using following two metrics: the number of measurements and  beam detection accuracy.

{We first evaluate the scalability of the proposed algorithm with the number of beams in  single-path scenarios, as shown in Fig. \ref{fig:measurement_vs_array_size}. It is evident that the proposed algorithm significantly reduces the number of measurements as compared to the BA method in 802.11ad. }For a small number ($N=32$) of beams, the proposed algorithm reduces the number of measurements by 2 times as compared to the 802.11ad benchmark. Furthermore, the proposed algorithm achieves higher performance gains for larger numbers of beams. For instance, for a large number ($N=512$) of beams, the proposed algorithm only needs around 40 measurements to identify the optimal beam, which reduces the number of measurements by 12 times as compared to the 802.11ad benchmark. The reason is that, different from the BA method in 802.11ad that explores all the beams, the proposed algorithm only needs to explore a few beams by leveraging the correlation structure and the prior knowledge.  The results validate that the proposed algorithm is a scalable solution even with a large number of beams. { In addition, we compare the HBA algorithm with the UBA algorithm. It can be seen that the UBA algorithm performs better than the HBA algorithm when the number of beams is small ($N\leq 32$). However, when the number of beams is large, HBA performs much better than UBA. Since  UBA works in a ``hill-climbing" manner to find the optimal beam, the number of measurements required by UBA increases with the number of beams due to a longer path to the optimal point. To avoid exceedingly high BA latency, the BA performance for a large number of beams is crucial.  Thus, the proposed algorithm is more effective than the UBA algorithm when the number of beams is large. Besides, UBA does not work well in multipath scenarios, while the proposed algorithm does.}

\begin{figure*}[t]
	\centering
	\renewcommand{\figurename}{Fig.}
	\begin{subfigure}[Number of measurements]{
			\label{fig:measurement_vs_distance}
			\includegraphics[width=0.45\textwidth]{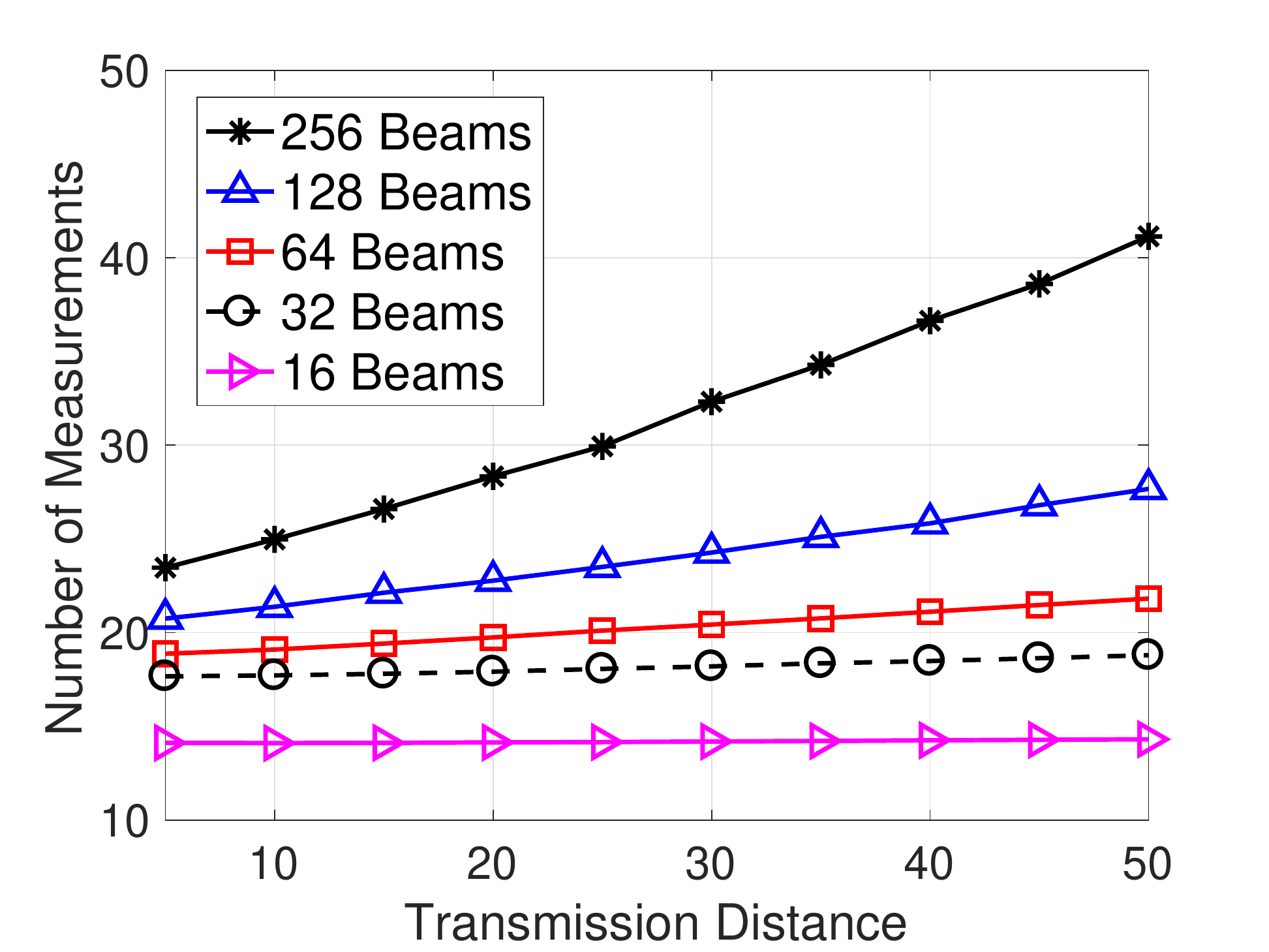}}
	\end{subfigure}
	~
	\begin{subfigure}[Beam detection accuracy]{
			\label{fig:accuracy_vs_distance}
			\includegraphics[width=0.45\textwidth]{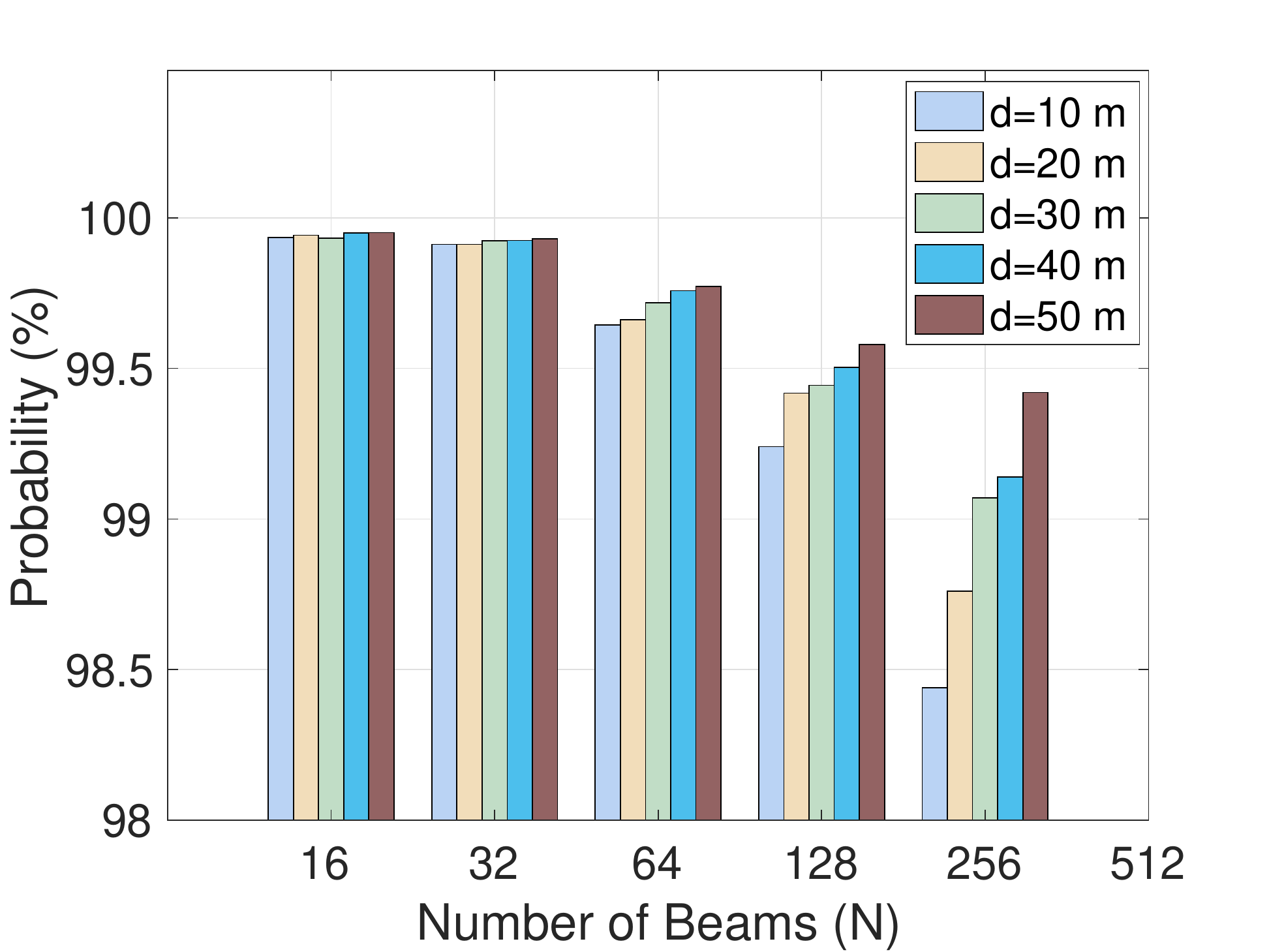}}
	\end{subfigure}
	\caption{Performance comparison with respect to transmission distance in two-path channels.}
	\label{Fig:Performance_vs_distance}
	\vspace{-0.7cm}
\end{figure*}

 As shown in Fig. \ref{Fig:performance_paths_beams.}, we further study the performance in multipath channels.  Due the inherent sparse characteristics of the mmwave channel, the number of paths is selected from 1 to 5. Firstly, the numbers of measurements in terms of the number of paths are compared in Fig. \ref{fig:Num_measurement_vs_beams_paths.}. It can be seen that the number of measurements increases slightly as the number of paths increases. For example, for a 128-beam case, the number of measurements in the five-path channel increases by 15\% as compared to that in the single-path channel. Secondly, beam detection accuracy performance is presented in Fig. \ref{fig:HBA_accuracy}. The HBA algorithm detects the optimal beam with a high probability, even in sophisticated multipath channels. Simulation results show that the beam detection accuracy is higher than 97\%, even in the worst case. In addition, the beam detection accuracy slightly decreases as the number of paths increases. For a large number ($N=256$) of beams, the beam detection accuracy decreases from 99.6\% in the single-path channel to 97.4\% in the five-path channel due to the sophisticated multipath channel.


Figure \ref{Fig:Performance_vs_distance} shows the impact of the transmission distance on the performance. 
We first observe that the number of measurements increases in terms of the transmission distance, as shown in Fig. \ref{fig:measurement_vs_distance}. Specifically, the number of measurements increases by 32\% as distance increases from 5 meters to 50 meters  for $N=128$. Because the RSS is weaker for a longer distance such that limited information can be extracted from nearby beams. Hence, the proposed algorithm needs to explore more beams to identify the optimal beam for remote users. Even for remote users, the proposed BA algorithm performs better than the 802.11ad benchmark. When the distance increases to 50 meters, our algorithm needs about 44 measurements for $N= 256$, which still reduces the number of measurements by 5.8 times as compared to the 802.11ad benchmark. Finally, the beam detection accuracy is presented in Fig. \ref{fig:accuracy_vs_distance}. Even in the low SNR case, the proposed algorithm can detect the optimal beam with a high probability.




\begin{figure*}[t]
	\centering
	\renewcommand{\figurename}{Fig.}
	\begin{subfigure}[Number of beam measurements]{
			\label{fig:inaccurate_prior_measurements.}
			\includegraphics[width=0.45\textwidth]{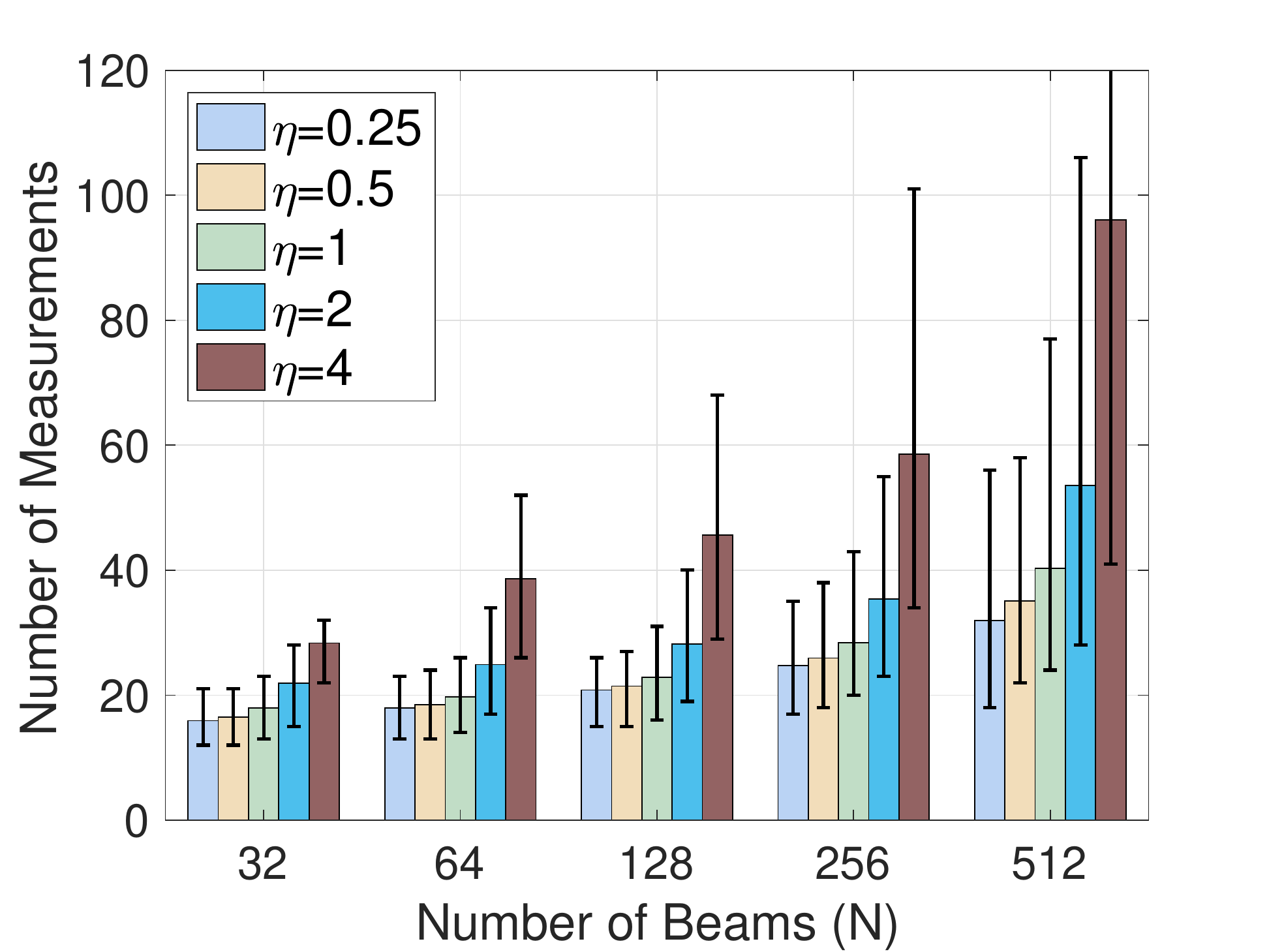}}
	\end{subfigure}%
	~
	\begin{subfigure}[Beam detection accuracy]{
			\label{fig:inaccurate_prior_probability}
			\includegraphics[width=0.45\textwidth]{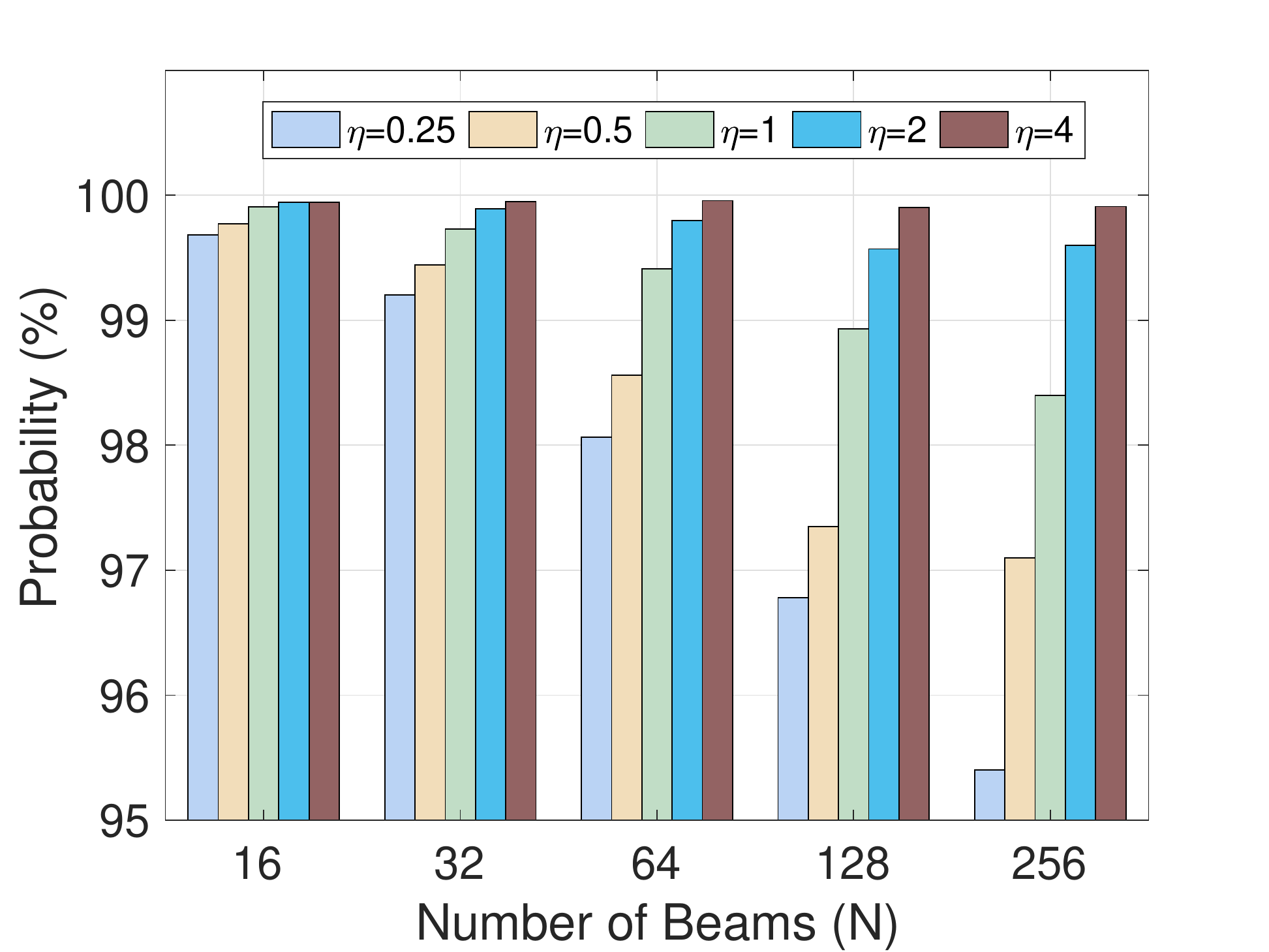}}
	\end{subfigure}
	\caption{Performance comparison with coarse prior knowledge in two-path channels.}
	\label{Fig:inacurate_prior.}
	\vspace{-0.5cm}
\end{figure*}

For implementation consideration, Fig. \ref{Fig:inacurate_prior.} presents the performance of HBA under coarse prior knowledge conditions. The metric of the coarse prior knowledge is defined as a ratio between the estimated variance $(\sigma^2_e)$ and the accurate one, i.e., $\eta={\sigma^2_e}/{\sigma^2}$. Hence, the coarse prior knowledge can be divided into two categories: the underestimated prior knowledge when $\eta<1$ and the overestimated prior knowledge when $\eta>1$. We can see from Fig. \ref{fig:inaccurate_prior_measurements.} that the number of  measurements increases as $\eta$ increases from 0.25 to 4. Specifically, for a 256-beam case, the HBA algorithm with the overestimated prior knowledge for $\eta=4$ requires more beam measurements as compared to that with accurate prior knowledge. Overestimating prior knowledge results in a larger confidence margin to accommodate reward uncertainty, such that more exploration efforts are needed and better beam detection accuracy can be achieved, as shown in Fig. \ref{fig:inaccurate_prior_probability}. In contrast, when prior knowledge is underestimated, the number of measurements is slightly smaller than that with accurate prior knowledge, while the beam detection accuracy decreases due to insufficient exploration efforts. More importantly, even with the coarse prior knowledge, the proposed algorithm can substantially reduce the number of measurements as compared to benchmarks, and achieve high beam detection accuracy. For a 256-beam case, even in the worst case, the proposed algorithm reduces the number of measurements by 6 times in comparison with the BA method in 802.11ad.

\subsection{BA Latency}


{Practical BA latency needs to take the 802.11ad protocol into consideration, which is different from a simple product of the number of measurements and the duration of each measurement. In the protocol, BA must be performed in the associated beamforming training (A-BFT) stage, which contains 8 A-BFT slots, and each A-BFT slot contains 16 sector sweep (SSW) frames. Each SSW frame can only provide one measurement for one beam and has a duration about 15.8 \emph{us} \cite{Ad_standard, ay_doc_short_SSW}. If the BA process cannot be finished in the A-BFT stage of the current beacon interval (BI), this BA process has to wait for the A-BFT stage in the next BI, which increases the BA latency for a whole BI duration. In the simulation, the duration of BI is set to 100 \emph{ms} \cite{Ad_standard}.  In addition, since the HBA algorithm requires the feedback of RSS of the selected beam at each round, the feedback latency should also be incorporated into the calculation of BA latency. The duration of a feedback frame at each round is about 1 \emph{us} in 802.11ad \cite{sur2018towards}. Taking the above protocol and the feedback latency into consideration, BA latency is calculated based on the average number of measurements. Table \ref{T:beam_alignment_latency} presents the BA latency with different numbers of beams in the two-path channel. As expected, the BA latency increases as the number of beams increases. For the case with one user, the proposed algorithm reduces the BA latency significantly as compared to the BA method in 802.11ad. In particular, for a large number ($N=256$) of beams, the BA latency drops from 106.07 \emph{ms} to only 0.94 \emph{ms}. This is because the BA process with the proposed algorithm can be finished in one BI as a small number of measurements is required to identify the optimal beam. Furthermore, a larger performance gain can be observed in the four-user case. In contrast to the BA method in 802.11ad which incurs more than 700 \emph{ms} latency for a 256-beam phase arrays, the proposed algorithm takes about 2.35 \emph{ms}, which corresponds to two orders of magnitude gain.} 

\begin{table}[t]
	\small
	\centering
	\caption{{BA latency with different numbers of beams in multipath channels.}}
	\label{T:beam_alignment_latency}
	\vspace{-0.3cm}
	\begin{center}
		\begin{tabular}{| c | c | c | c | c |}
			\hline
			& \multicolumn{2}{ c |}{{\textbf{One user}}}  & \multicolumn{2}{ c |}{{\textbf{Four-user}}} \\ 
			\cline{2-5}
			{\textbf{$N$}}	& {\textbf{802.11ad}} & {\textbf{HBA}} & {\textbf{802.11ad}} & {\textbf{HBA}} \\
			\hline			
			{16} &  {0.51 ms}&  { 0.48 ms}&   {1.26 ms}& {1.19 ms}\\ \hline
			{32} &   {1.01 ms}&  {0.59 ms} & {2.53 ms}& {1.47 ms}\\ \hline
			{64}&  {2.02 ms}&  {0.65 ms}&  {103.03 ms}& {1.63 ms}\\ \hline
			{128} &  {4.04 ms}&   {0.76 ms}&  {304.04 ms}& {1.89 ms}\\ \hline
			{256} &  {106.07 ms}&  {0.94 ms}& { 706.07 ms}& {2.35 ms}\\ \hline
		\end{tabular}
	\end{center}
\end{table}

%
%


\section{Conclusion}\label{sec: conclusion}
In this paper, we have investigated the BA problem in mmwave systems to find the optimal beam pair. We have developed HBA, a learning algorithm which leverages the inherent correlation structure among beams and the prior knowledge on the channel fluctuation to accelerate the BA process. The proposed HBA algorithm can identify the optimal beam with a high probability using a small number of beam measurements, even when the number of beams is large. HBA can be applied to meet the demand of delay-sensitive Gbps applications, such as cordless virtual reality gaming. Beyond the BA problem, the design principle of leveraging correlation structure is useful in other  optimization problems in wireless networks, such as power allocation and interference mitigation. 
For our future works, it would be interesting to extend the proposed algorithm to mobile scenarios, where the environment is highly dynamic and delay requirement is more stringent. In such scenario, the main challenge lies in extracting information from the real-time environment to speed up BA.


\appendix
\subsection{Proof of Theorem \ref{lemma:unimodality_reward_function}}\label{appendix:theorem multimodality}


	
	According to \eqref{equ:antenna_directivity_gain}, the maximum RSS can be achieved with the minimum angular misalignment denoted by, $\delta=\omega_{i^\star}-\vartheta$, where $\omega_{i^\star}$ is the spatial angle for the optimal transmit beam. 
	Hence, 
	$D\left(\omega_i-\vartheta_{}\right)$ can be rewritten as
	\begin{equation}\label{equ:D_monotonical}
	\begin{split}
	D\left(\omega_i-\vartheta_{}\right)
	&=D\left(\delta +\frac{2(i-i^\star)}{N}\right)
	=\frac{\sin^2({N  \pi d \delta }/{\lambda})}{ \sin^2\left({\pi d \left(\delta +\frac{2(i-i^\star)}{N}\right)}/{\lambda}\right)}, \forall b_i\in \mathcal{B}.
	\end{split}
	\end{equation}
	
	From simple analysis in \eqref{equ:D_monotonical}, $D\left(\omega_i-\vartheta_{}\right)$ monotonically increases in $[i^\circ,i^\star]$ and decreases in $[i^\star, i^\star+\frac{N}{2}]$, where $i^\circ=i^\star-\frac{N}{2}$. 
	Hence, the mean RSS function over the beam space increases along path $(b_{i^\circ},b_{i^\circ+1},..., b_{i^\star})$ and decreases along path $(b_{i^\star},b_{i^\star+1},..., b_{i^\circ-1})$, i.e., $r(b_{i^\circ})<r(b_{i^\circ+1})<...<r(b_{i^\star})>...>r(b_{i^\circ-2})>r(b_{i^\circ-1})$.
	With the definition of the unimodality structure, the mean RSS function is unimodal over the beam space in the single-path channel, and the theorem statement follows. 

\subsection{Proof of Corollary \ref{proposition:multimodal}}\label{appendix:tmultimodality}

	Similar to \eqref{equ:antenna_directivity_gain}, the mean RSS in the multipath channel is represented by
	\begin{equation}
	\begin{split}
	\mathbb{E}\left[	r({b_i})\right]
	&=\underbrace{\frac{Pg_0^2 }{N }D\left(\omega_i-\vartheta_{0}\right)}_{\text{LOS component}}+\underbrace{\sum_{l=1}^{L-1}\frac{Pg_l^2 }{N }D\left(\omega_i-\vartheta_{l}\right)}_{\text{NLOS component}}+N_oW
	\end{split}
	\end{equation}
	
	Above equation indicates that the aggregated RSS consists of a LOS component and several NLOS components. For each individual path of the mmwave channel, the corresponding RSS function is unimodal function based on Theorem \ref{lemma:unimodality_reward_function}. Hence, the RSS function in the multipath channel is the aggregation of several unimodal functions, which can be considered as a multimodal function. Specifically, $L$ paths exist in the mmwave channel, which corresponds to $L$ peaks in the multimodal function. As the channel gain of the LOS path is  significantly larger than that of NLOS paths, i.e., $g_0^2> g_l^2$. Hence, the dominant peak corresponds to the LOS path while other peaks correspond to NLOS paths. Hence, the Corollary \ref{proposition:multimodal} is proved.

\subsection{Proof of Lemma \ref{lemma:expected_N_bound}}\label{appendix:lemma bound proof}

	For any integer $m>0$, according to the definition, the average times that node $(h,j)$ has been visited up to time slot $T$, is given by
	\begin{equation}\label{equ:E[H(T)]_bound}
	\begin{split}
&\mathbb{E}\left[	N_{h,j}(T) \right]
	=\mathbb{E}\left[\sum_{t=1}^{T}\mathbbm{1}_{(H_t,J_t)\in {C}_{h,j} }\right]\\
	&=\mathbb{E}\left[ \sum_{t=1}^{T}  \mathbbm{1}_{\{(H_t,J_t)\in {C}_{h,j} , N_{h,j}(t)\leq m \}} \right]
	+\mathbb{E}\left[ \sum_{t=1}^{T} \mathbbm{1}_{\{(H_t,J_t)\in {C}_{h,j} ,N_{h,j}(t)> m \}}\right]\\
	&\leq m+\mathbb{E}\left[\sum_{t=m+1}^{T}\mathbbm{1}_{\{(H_t,J_t)\in {C}_{h,j} , N_{h,j}(t)> m \}}\right]
	= m+\sum_{t=m+1}^{T} \mathbb{P}\left( \left(H_t,J_t\right)\in {C}_{h,j}, N_{h,j}(t)> m \right).
	\end{split}
	\end{equation}
	where $\mathbbm{1}_{\{\cdot\}}$ is the indicator function and $(H_t,J_t)\in {C}_{h,j}$ denotes the selected node $(H_t,J_t)$ locates within ${C}_{h,j}$. The first equality is because $N_{h,j}(t)> m$ only occurs when $t$ is larger than $m$. 
	
	We apply a case study to obtain an upper bound of $\mathbb{E}\left[	N_{h,j}(T) \right]$. Assume node $(h,j)$ is selected at time slot $t$. The path from root node $(0,1)$ to $(h,j)$ is given by, $\mathcal{P}=\{ (0,1), (1, j_{1}^\star),..., (k, j_k^\star),(k+1, j_{k+1}^o),..., (h,j) \}$, where $k$ denotes the largest depth of the optimal node in the path. Before node $(k, j_{k}^\star)$, the optimal nodes are selected. For notation simplicity, we omit the time slot $t$ in $Q_{k,j}(t)$. 
	After traversing node $(k, j_{k}^\star)$, a sub-optimal node $(k+1, j_{k+1}^o)$ is selected instead of the optimal node $(k+1, j_{k+1}^\star)$ because the suboptimal node has a larger $Q$-value than the optimal node, i.e., $Q_{k+1, j^o}\geq Q_{k+1, j^\star }$. As $Q$-values increase along path $\mathcal{P}$, we have $Q_{k+1, j^\star }\leq Q_{k+1, j_{k+1}^o}\leq,..., \leq Q_{h, j}$. Note that $Q$-values are upper bounded by $E$-values according to the definition, such that 
	$Q_{k+1, j^\star }\leq E_{h, j}$. Further, event $ Q_{k+1, j^\star }\leq E_{h, j}$ can be interpreted as the union of two events, $\{Q_{k+1, j^\star }\leq f^\star\}\cup \{ E_{h, j}\geq f^\star\}$. Hence, the probability that $(H_t,J_t)$ locates within $C_{h,j}$ is upper bounded by
	\begin{equation}\label{equ: set1}
	\mathbb{P}\left((H_t,J_t)\in C_{h,j}\right) \leq \mathbb{P}\left(Q_{k+1, j^\star }\leq f^\star \right)+ \mathbb{P}\left(E_{h, j}\geq f^\star\right). 
	\end{equation}
	
	With the definition of $Q$-value, the $Q$-value of a node is the minimum value among the $E$-value of the node and $Q$-values of its child nodes. Hence, event $\{Q_{k+1, j^\star }\leq f^\star \}$ can be interpreted as the union of two new events, $\{E_{k+1, j^\star }\leq f^\star \} \cup \{Q_{k+2, j_{k+2}^\star }\leq f^\star \}$. Since event $\{Q_{k+2, j_{k+2}^\star }\leq f^\star \}$ can be further recursively expanded as $\bigcup\limits_{s=k+2}^{t-1}\{E_{s, j_s^\star }\leq f^\star \}$, we have
	\begin{equation}\label{equ: set2}
		\mathbb{P}\left(Q_{k+1, j^\star }\leq f^\star \right)\leq \sum \limits_{s=k+1}^{t-1} 	\mathbb{P}\left(E_{s, j_s^\star }\leq f^\star \right).
	\end{equation}
	
	Substituting \eqref{equ: set2} and  \eqref{equ: set1} into \eqref{equ:E[H(T)]_bound}, \eqref{equ:E[H(T)]_bound} can be rewritten as
	
	\begin{equation}\label{equ:N_h_j_bound }
	\begin{split}
	\mathbb{E}\left[	N_{h,j}(T) \right] 
	&\leq m+\sum_{t=m+1}^{T} \left( \sum_{s=k+1}^{t-1}\mathbb{P}\left( E_{s, j^\star }\left(t\right)\leq f^\star  \right) + \mathbb{P}\left( E_{h, j}\left(t\right)\geq f^\star,N_{h,j}(t)> m  \right)  \right).
	\end{split}
	\end{equation}
	
	The following analysis is to bound the three terms in \eqref{equ:N_h_j_bound } separately.

Firstly, since $m$ is an arbitrary integer, taking $m$ as the smallest integer that satisfies the condition $m\geq \frac{8\sigma^2 \log T}{\left(\epsilon_{h,j}-c_1\gamma^h \right)^2}$.  
	 Hence $m$ is bounded by
	\begin{equation}\label{equ:m_value}
	m \leq \frac{8\sigma^2 \log T}{\left(\epsilon_{h,j}-\rho_1\gamma^h \right)^2}+1.
	\end{equation}
	
Secondly, we aim to bound the first term $\mathbb{P}\left( E_{s, j^\star }\leq f^\star  \right)$. For the optimal nodes $(h, j^\star)$,  according to the definition of $E$-values, $E_{h, j^\star}=\infty$ when $N_{h,j^\star}=0$. Hence, event $E_{h, j^\star }\leq f^\star  $ only occurs when $N_{h,j}\geq1$. As a result, $\mathbb{P}\left( E_{h, j^\star }\leq f^\star  \right)$ can be rewritten as
\begin{equation}\label{equ:E_bound}
\begin{split}
&\mathbb{P}\left( E_{h, j^\star }\leq f^\star ,N_{h,j}\geq1 \right)=\mathbb{P}\left( R_{h, j^\star }+\sqrt{\frac{2\sigma^2 \log t}{N_{h,j^\star}}}+\rho_1\gamma^h \leq f^\star ,N_{h,j^\star}\geq1 \right)\\
&=\mathbb{P}\left(  \left( f^\star- R_{h, j^\star } -\rho_1\gamma^h\right)N_{h,j^\star}\geq  \sqrt{2\sigma^2N_{h,j^\star} \log t}  ,N_{h,j^\star}\geq1 \right)\\
&\stackrel{{(a)}}{=}\mathbb{P}\left( \sum_{s=1}^{t} \left(f^\star -f(X_s) +\rho_1\gamma^h \right)\mathbbm{1}_{(H_t,J_t)\in {C}_{h,j^\star}}  \right. 
\\&\left. + \sum_{s=1}^{t} \left( f(X_s)-Y_s\right) \mathbbm{1}_{(H_t,J_t)\in {C}_{h,j^\star}} \geq  \sqrt{2\sigma^2 N_{h,j^\star}\log t}  ,N_{h,j^\star}\geq1 \right)\\
&\stackrel{{(b)}}{\leq}\mathbb{P}\left( \sum_{s=1}^{t} \left(f(X_s)-Y_s\right) \mathbbm{1}_{(H_t,J_t)\in {C}_{h,j^\star}} \geq  \sqrt{2\sigma^2N_{h,j^\star} \log t},N_{h,j^\star}\geq1 \right)\\
&\stackrel{{(c)}}{=}\mathbb{P}\left( \sum_{p=1}^{N_{h,j^\star}}\left( \tilde{Y}_p-f(\tilde{X}_p) \right)  \geq  \sqrt{2\sigma^2 N_{h,j^\star}\log t}  ,N_{h,j^\star}\geq1 \right).
\end{split}
\end{equation}
In \eqref{equ:E_bound}, the first step follows from the definition of $E$-value in \eqref{equ:E_value}; $(a)$ is obtained from the definition of $ N_{h,j^\star}$, where $X_s, \forall s=1,2,...,t-1$ denotes the sequentially selected beams up to time $t-1$ and the corresponding reward sequence is represented by $Y_s$; $(b)$ follows from the fact that $ f^\star-f(X_t) -\rho_1\gamma^h<0$ holds for all the beams in the optimal region ${C}_{h,j^\star}$; $(c)$ is because the definition of a new beam selection sequence $\tilde{X}_p, \forall p=1,2,3,...$ whose corresponding reward sequence is $\tilde{Y}_p$.  

Let $T_p=\min\{t:N_{h,j}(t)=p\}$ represent the time sequence for the selected node in $C_{h,j}$. The sequentially selected beams can be represented by a new sequence $\tilde{X}_p=X_{T_p},\forall p=1,2,3,...$, and \eqref{equ:E_bound} can be further bounded by
\begin{equation}\label{equ:finalE_bound}
\begin{split}
&\mathbb{P}\left( \sum_{p=1}^{N_{h,j_h^\star}}\left( \tilde{Y}_p-f(\tilde{X}_p) \right)  \geq  \sqrt{2\sigma^2 N_{h,j^\star}\log t}  ,N_{h,j_h^\star}\geq1 \right)\\
&\stackrel{{(a)}}{\leq}\sum_{s=1}^{t}\mathbb{P}\left( \sum_{p=1}^{s}\left( \tilde{Y}_p-f(\tilde{X}_p) \right)  \geq  \sqrt{2\sigma^2 s\log t}   \right)
\stackrel{{(b)}}{\leq}\sum_{s=1}^{t} \exp \left( -\frac{4\sigma^2 s\log t}{  s\sigma^2} \right)=t^{-3}.
\end{split}
\end{equation}
In \eqref{equ:finalE_bound}, $(a)$ can be acquired via the union bound that takes all possible values of $N_{h,j_h^\star}$; as $\tilde{D}_p =\tilde{Y}_p- f(\tilde{X}_p)$ can be considered as martingale differences, $(b)$ is obtained via the Hoeffding-Azuma inequality \cite{NIPS}
\begin{equation}\label{equ: the Hoeffding-Azuma inequality }
\mathbb{P}\left(  \sum_{p=1}^{ k} \tilde{D}_p \geq t \right)\leq \exp\left(-\frac{2 t^2}{\sum_{p=1}^{k} \sigma^2 }\right) .
\end{equation} 

Thirdly, for suboptimal nodes $(h,j)$, the upper bound of $\mathbb{P}\left( E_{h, j}\geq f^\star, N_{h,j}> m \right)$ can be obtained via a similar method of bounding $\mathbb{P}\left( E_{h, j^\star }\leq f^\star ,N_{h,j}\geq1 \right)$, such that
	\begin{equation}\label{equ:bound_E>f^star}
	\begin{split}
	&\mathbb{P}\left( E_{h, j}\geq f^\star, N_{h,j}> m \right)
	=\mathbb{P}\left( R_{h, j}+\sqrt{\frac{2\sigma^2\log t}{N_{h,j} }}+\rho_1\gamma^h  \geq f_{h,j}^\star+ \epsilon_{h,j}, N_{h,j}> m \right)\\
	&\stackrel{{(a)}}{\leq} \mathbb{P}\left( R_{h, j}  \geq f_{h,j}^\star+\frac{\epsilon_{h,j}-\rho_1\gamma^h}{2},N_{h,j}> m  \right)\\
	&= \mathbb{P}\left(  \left( R_{h, j} - f_{h,j}^\star\right)N_{h,j}  \geq  \frac{\epsilon_{h,j}-\rho_1\gamma^h}{2}N_{h,j}, N_{h,j}> m  \right)\\
	&\stackrel{{}}{=}\mathbb{P}\left(  \sum_{s=1}^{t} \left(Y_s- f_{h,j}^\star\right) \mathbbm{1}_{(H_s,J_s)\in {C}_{h,j}} \geq N_{h,j} \frac{\epsilon_{h,j}-\rho_1\gamma^h}{2},   N_{h,j}> m \right)\\
	&\leq \mathbb{P}\left(  \sum_{s=1}^{t} \left(Y_s- f(X_s)\right) \mathbbm{1}_{(H_s,J_s)\in {C}_{h,j}} \geq N_{h,j} \frac{\epsilon_{h,j}-\rho_1\gamma^h}{2},  N_{h,j}> m \right)\\
	&\stackrel{{(b)}}{=} \mathbb{P}\left(   \sum_{p=1}^{ N_{h,j}} \left(\hat{Y}_p- f(\hat{X}_p)\right) \geq N_{h,j} \frac{\epsilon_{h,j}-\rho_1\gamma^h}{2}, N_{h,j}> m  \right)
	\end{split}
	\end{equation} 
	In \eqref{equ:bound_E>f^star}, $(a)$ due to the substitution of $ N_{h,j}(t)\geq \frac{8\sigma^2 \log t}{\left(\epsilon_{h,j}-\rho_1\gamma^h \right)^2}$ where $m \geq \frac{8\sigma^2 \log t}{\left(\epsilon_{h,j}-\rho_1\gamma^h \right)^2}$;  $(b)$ is obtained via a similar method as  \eqref{equ:E_bound}$(c)$, where a new beam sequence $\{\hat{X}_1,\hat{X}_2,...,\hat{X}_p \}$ is formed to represent the sequentially selected beams in ${C}_{h,j}$. Next, \eqref{equ:bound_E>f^star} can be further bounded by	
	\begin{equation}\label{equ: finalbound_E>f^star}
	\begin{split}
	&\mathbb{P}\left(   \sum_{p=1}^{ N_{h,j}} \left(\hat{Y}_p- f(\hat{X}_p)\right) \geq N_{h,j} \frac{\epsilon_{h,j}-\rho_1\gamma^h}{2}, N_{h,j}> m  \right)\\
	&\stackrel{{(a)}}{\leq} \sum_{k=m+1}^{t}\mathbb{P}\left(   \sum_{p=1}^{ k} \left(\hat{Y}_p- f(\hat{X}_p)\right) \geq \frac{k(\epsilon_{h,j}-\rho_1\gamma^h)}{2} \right)
	\stackrel{{(b)}}{\leq} \sum_{k=m+1}^{t} \exp\left(  -\frac{k\left(\epsilon_{h,j}-\rho_1\gamma^h \right)^2}{2\sigma^2}\right) \\
	&\leq t  \exp\left(  -\frac{m\left(\epsilon_{h,j}-\rho_1\gamma^h \right)^2}{2\sigma^2}\right) 
	\stackrel{{(c)}}{\leq} t\exp\left( -4\log T \right)=tT^{-4}\\
	\end{split}
	\end{equation}
In \eqref{equ: finalbound_E>f^star}, $(a)$ is due to a similar union bound in \eqref{equ:finalE_bound}(a); $(b)$ is obtained via the Hoeffding-Azuma inequality; $(c)$ is obtained via the substitution of $m\geq \frac{8\sigma^2 \log T}{\left(\epsilon_{h,j}-\rho_1\gamma^h \right)^2}$.

Finally, substituting \eqref{equ:m_value}, \eqref{equ:finalE_bound} and \eqref{equ: finalbound_E>f^star} into \eqref{equ:N_h_j_bound }, the upper bound  is given by
	\begin{equation}
	\begin{split}
	\mathbb{E}\left[	N_{h,j}(T) \right] 
	&\leq \frac{8\sigma^2 \log T}{\left(\epsilon_{h,j}-\rho_1\gamma^h \right)^2}+1+\sum_{t=m+1}^{T} \left(\sum_{k+1}^{t-1}t^{-3}+tT^{-4}\right)\\
	&\leq \frac{8\sigma^2 \log T}{\left(\epsilon_{h,j}-\rho_1\gamma^h \right)^2}+1+\sum_{t=1}^{T}\left(t^{-2}+T^{-3}\right)
	 \leq \frac{8\sigma^2 \log T}{\left(\epsilon_{h,j}-\rho_1\gamma^h \right)^2}+c
	\end{split}
	\end{equation}
	where $c$ is a constant. The last step is because $\sum_{t=1}^{T}t^{-2}$ is bounded. Hence, Lemma \ref{lemma:expected_N_bound} is proved.

\subsection{Proof of Theorem \ref{theorem: regret_bound}}\label{appendix:regret bound proof}

All nodes with depth $h$ can be divided into two subsets: ${\Phi}_h$ that denotes the set of all the $2\rho_1\gamma^h$-optimal nodes, and ${\Omega}_h$ that denotes the set of nodes whose parents belong to ${\Phi}_{h-1}$ while itself does not belong to ${\Phi}_h$. Let $H\geq 1$ be an integer whose value is determined later. With above definition, $\mathcal{T}$ can be divided into three subtrees: $\mathcal{T}_1$, $\mathcal{T}_2$ and $\mathcal{T}_3$. 
Let $\mathcal{T}_1$ contain ${\Phi}_H$ and its decedents. Let $\mathcal{T}_2$ include all the $2\rho_1\gamma^h$-optimal nodes at all the depths smaller than $H$, i.e., $\mathcal{T}_2=\bigcup\limits_{h=1}^{H-1}{\Phi}_h$. Let $\mathcal{T}_3$ include all the nodes in ${\Omega}_h$ at all the depths smaller than $H$, i.e., $\mathcal{T}_3=\bigcup\limits_{h=1}^{H}{\Omega}_h$. Hence the cumulative regret can be partitioned as
\begin{equation}\label{equ:total_regret}
\begin{split}
R^\pi\left({T}\right)
&=\mathbb{E}\left[R^\pi\left(\mathcal{T}_1\right)\right]+\mathbb{E}\left [R^\pi\left(\mathcal{T}_2\right)\right]+\mathbb{E}\left [R^\pi\left(\mathcal{T}_3\right)\right]
\end{split}
\end{equation}
where 
$$\mathbb{E}\left [R^\pi\left(\mathcal{T}_i\right)\right]=\mathbb{E}\left [\sum_{t=1}^{T}\left(f^\star-f\left(X_t\right)\right)  \mathbbm{1}_{\{ (H_t,J_t)\in \mathcal{T}_i \}}\right].$$
 Next, the regret analysis follows the idea of bounding the regret on each subtree separately.

\textbf{Step 1: Bounding the regret on $\mathcal{T}_1$}. 
As each node in ${\Phi}_H$ is $2\rho_1\gamma^H$-optimal, all the beams located in ${\Phi}_H$ are $4\rho_1\gamma^H$-optimal, i.e., $f^\star-f\left( X_t\right)\leq4\rho_1\gamma^H$, $X_t\in {\Phi}_H$. In addition, it is obvious that the number of nodes in subtree $\mathcal{T}_1$ is smaller than the time horizon, i.e., $| \mathcal{T}_1|\leq T$ where $|\cdot|$ represents the cardinality operator. Therefore, the regret on $\mathcal{T}_1$ is upper bounded by
\begin{equation}\label{eq:regret_bound_T1}
\mathbb{E}\left[R^\pi\left(\mathcal{T}_1\right)\right]\leq4\rho_1\gamma^H T.
\end{equation}

\textbf{Step 2: Bounding the regret on $\mathcal{T}_2$}. 
As $\mathcal{T}_2=\bigcup\limits_{h=1}^{H-1}{\Phi}_h$ and each beam in ${\Phi}_h$ is $4\rho_1\gamma^h$-optimal, the regret on $\mathcal{T}_2$ can be written as
$\mathbb{E}\left[R^\pi\left(\mathcal{T}_2\right)\right]\leq\sum_{h=1}^{H-1}4\rho_1\gamma^h |\Phi_h|.$
Based on the results in \cite{NIPS}, we have $|\Phi_h|\leq c_1\left(\rho_2 \gamma^h \right)^{-\kappa}$ where $\kappa=\frac{1}{\beta}-\frac{1}{\alpha}$. Specifically, $\alpha$ and $\beta$ are give in the weak Lipschitz assumption and the dissimilarity function, respectively.
The regret on $\mathcal{T}_2$ can be further bounded by 
\begin{equation}\label{equ:regret_bound_T2}
\begin{split}
\mathbb{E}\left[R^\pi\left(\mathcal{T}_2\right)\right]
&\leq \sum_{h=1}^{H-1}4\rho_1\gamma^h c_1\left(\rho_2 \gamma^h \right)^{-\kappa}
=4\rho_1c_1\rho_2^{-\kappa}\sum_{h=0}^{H-1}\gamma^{h(1-\kappa)}
\leq \frac{4\rho_1c_1\rho_2^{-\kappa}}{1-\gamma^{1-\kappa}}.
\end{split}
\end{equation}
From \eqref{equ:regret_bound_T2}, we can see that  $\mathbb{E}\left[R^\pi\left(\mathcal{T}_2\right)\right]$ is upper bounded by a constant as  $\mathcal{T}_2$ is a finite tree.

\textbf{Step 3: Bounding the regret on $\mathcal{T}_3$}. For each node in $\Omega_h$, its parents should be included by $\Phi_{h-1}$. Thus, all the beams in $\Omega_h$ are $4\rho_1\gamma^{h-1}$-optimal and the cardinality of $\Omega_h$ is smaller than $2|\Phi_{h-1}|$. Besides, with the results in Lemma \ref{lemma:expected_N_bound}, $\mathbb{E}\left[ N_{h,j}(t)\right] =\frac{8\sigma^2 \log t}{\left(\rho_1 \gamma^h\right)^2}+c $, for any $2\rho_1\gamma^{h-1}$-optimal nodes. Thus, the regret on $\mathcal{T}_3$ is given by
\begin{equation}\label{equ:regret_bound_T3}
\begin{split}
\mathbb{E}\left[R^\pi\left(\mathcal{T}_3\right)\right]
&\leq\sum_{h=1}^{H}4\rho_1\gamma^{h-1}2|\Phi_{h-1}| \mathbb{E}\left[ N_{h,j}(T) \right]
\leq 8\rho_1c_1\rho_2^{-\kappa} \sum_{h=1}^{H}\gamma^{(h-1)(1-\kappa)}\left(\frac{8\sigma^2 \log T}{\left(\rho_1 \gamma^h\right)^2}+c\right).\\
\end{split}
\end{equation}

Finally, substituting \eqref{eq:regret_bound_T1}, \eqref{equ:regret_bound_T2} and \eqref{equ:regret_bound_T3} into \eqref{equ:total_regret}, we have 
\begin{equation}\label{equ:regret_bigO}
\begin{split}
R^\pi\left({T}\right)
&\leq 4\rho_1\gamma^H T +\frac{4\rho_1c_1\rho_2^{-\kappa}}{1-\gamma^{1-\kappa}}+  8\rho_1c_1\rho_2^{-\kappa} \sum_{h=1}^{H}\gamma^{(h-1)(1-\kappa)}\left(\frac{8\sigma^2 \log T}{\left(\rho_1 \gamma^h\right)^2}+c\right)\\
&= O\left( \gamma^H T+\log T\gamma^{-H(1+\kappa)}  \right)
=O\left( T^{\frac{\kappa+1}{\kappa+2}}\left(\log T\right)^{\frac{1}{\kappa+2}} \right).
\end{split}
\end{equation}
The last step is obtained from setting $\gamma^H$ as the order of $\left({T}/{\log T}\right)^{-{1}/{(\kappa+2)}}$ \cite{NIPS}. {If the smoothness of the function is known, we can set $\alpha=\beta$ such that $\kappa=0$ \cite{NIPS}}. Hence, \eqref{equ:regret_bigO} can be rewritten as $O\left(\sqrt{T\log T}\right)$, and then the theorem is proved.

\bibliographystyle{IEEEtran}
\bibliography{security}

\end{document}